\def\A{ {\cal A} }
\def\E{ {\cal E} }
\def\L{ {\cal L} }
\def\H{ {\cal H} }
\def\L{ {\cal L} }
\def\R{ \mathbb{R} }
\def\>{\rangle}
\def\<{\langle}
\newcommand{\bra}[1]{\langle {#1} |}
\newcommand{\ket}[1]{| {#1} \rangle}
\newcommand{\abs}[1]{\left| {#1} \right|} 
\newcommand{\ketbra}[2]{\ensuremath{\left|#1\right\rangle\!\!\left\langle#2\right|}}
\newcommand{\matrixel}[3]{\ensuremath{\left\langle #1 \vphantom{#2#3} \right| #2 \left| #3 \vphantom{#1#2} \right\rangle}}
\newcommand{\tr}[2]{\mathrm{Tr}_{#1}\left( #2 \right)}
\newcommand{\iden}{\mathbb{1}}
\renewcommand{\v}[1]{\ensuremath{\boldsymbol #1}}
\newcommand{\be}{\begin{equation}}
\newcommand{\ee}{\end{equation}}
\definecolor{nicepurple}{rgb}{0.3,0,0.6}
\def\Re{\operatorname{Re}}
\def\Im{\operatorname{Im}}
\newcommand{\om}[2]{(\omega_{#1 #2})}
\newcommand{\omp}[2]{(\omega_{#1'\! #2})}
\newcommand{\ompp}[2]{(\omega_{#1'\! #2 '\!})}
\def\P{\mathcal{P}}
\theoremstyle{plain}
\newtheorem{thm}{Theorem}
\newtheorem{lem}[thm]{Lemma}
\newtheorem{corol}[thm]{Corollary}
\theoremstyle{definition}
\theoremstyle{remark}
\newtheorem{ex}{Example}
\begin{document}

\title{Markovian evolution of quantum coherence under symmetric dynamics}
\author{Matteo Lostaglio}
\affiliation{Department of Physics, Imperial College London, London SW7 2AZ, United Kingdom}
\affiliation{ICFO-Institut de Ciencies Fotoniques, The Barcelona Institute of Science and Technology, Castelldefels (Barcelona), 08860, Spain}
\author{Kamil Korzekwa}
\affiliation{Department of Physics, Imperial College London, London SW7 2AZ, United Kingdom}
\affiliation{Centre for Engineered Quantum Systems, School of Physics, The University of Sydney, Sydney, NSW 2006, Australia}
\author{Antony Milne}
\affiliation{Department of Physics, Imperial College London, London SW7 2AZ, United Kingdom}
\affiliation{Department of Computing, Goldsmiths, University of London, New Cross, London SE14 6NW, United Kingdom}

\begin{abstract}

Both conservation laws and practical restrictions impose symmetry constraints on the dynamics of open quantum systems. In the case of time-translation symmetry, which arises naturally in many physically relevant scenarios, the quantum coherence between energy eigenstates becomes a valuable resource for quantum information processing. In this work we identify the minimum amount of decoherence compatible with this symmetry for a given population dynamics. This yields a generalisation to higher-dimensional systems of the relation $T_2 \leq 2 T_1$ for qubit decoherence and relaxation times. It also enables us to witness and assess the role of non-Markovianity as a resource for coherence preservation and transfer. Moreover, we discuss the relationship between ergodicity and the ability of Markovian dynamics to indefinitely sustain a superposition of different energy states. Finally, we establish a formal connection between the resource-theoretic and the master equation approaches to thermodynamics, with the former being a non-Markovian generalisation of the latter. Our work thus brings the abstract study of quantum coherence as a resource towards the realm of actual physical applications. 
\end{abstract}

\maketitle

\section{Introduction}
\label{sec:intro}
	
The consequences of symmetry in physics are of the utmost importance. Considerable insight into the evolution of a complex system can often be gained through an understanding of the underlying symmetries, even when the precise dynamics are not fully known or are too complicated to be solved exactly. One fundamental class of dynamics consists of those that are symmetric under time translations. This restriction originates from a conservation law for energy~\cite{ahmadi2013wigner,marvian2014extending}, the lack of a shared reference frame for time~\cite{bartlett2007reference}, or a superselection rule~\cite{gour2008resource}. Moreover, we will see that common assumptions made in the study of open quantum systems, such as the secular approximation~\cite{breuer2002open}, can be rephrased as symmetry constraints on the system dynamics. Such dynamics arise naturally in many areas of quantum physics, including the resource-theoretic formulation of thermodynamics~\cite{janzing2000thermodynamic, brandao2011resource, lostaglio2015description}, quantum metrology~\cite{bartlett2007reference, hall2012does, marvian2016quantify}, quantum noise of amplifiers~\cite{marvian2014modes}, cloning~\cite{dariano2001optimal, dariano2003optimal}, non-locality~\cite{verstraete2003quantum}, cryptography~\cite{verstraete2003quantum, bartlett2007reference} and quantum speed limits~\cite{marvian2016quantum}.
	
It is understood that symmetries of a system interacting with an environment can be studied within a framework that generalises Noether's theorem~\cite{marvian2014extending}. However, what are the general consequences of symmetries for open dynamics that are Markovian? The importance of this question is two-fold. Firstly, from a fundamental perspective, we wish to understand the interplay between memory effects and symmetries of the dynamics. Secondly, from the point of view of applications, it is crucial to unify the recent resource-theoretic approach~\cite{marvian2014extending, streltsov2016quantum} with the master equation formalism. This unification is particularly important for furthering research in fields such as quantum thermodynamics, in which the two approaches are currently very much distinct and rather disconnected.

In this work we focus on symmetry under time translations, a property characterising dynamical evolutions whose action is insensitive to their particular timing. Such dynamics constrain possible transformations of coherence, which then becomes an essential resource in quantum information processing \cite{gour2008resource, marvian2016quantify}. A central question is therefore: what is the minimal amount of decoherence compatible with a given population dynamics (e.g., relaxation process)? The main technical contribution of this paper is Theorem~\ref{thm:markovian_bound}, which gives the optimal coherence evolution consistent with a given population dynamics among all time-translation symmetric and Markovian processes. We also present several applications to illustrate the utility of our result.

Our study relies on a seminal work of Holevo~\cite{holevo1993note} that investigated the structure of covariant dynamical semigroups. In contrast to much of the literature that followed~\cite{holevo1995structure, vacchini2002quantum, vacchini2001translation, vacchini2005theory, hornberger2006master}, here we focus on finite-dimensional systems and our results on decoherence emerge directly from the underlying symmetry of the dynamics rather than the behaviour of a particular model. Moreover, our perspective on the problem is based on a resource-theoretic treatment of coherence, and thus we study the optimal limits of coherence processing. 

We begin in Sec. \ref{sec:elementaryexample} and \ref{sec:assumptions} by introducing more precisely the dynamics we will investigate and the underlying assumptions of Markovianity and time-translation symmetry. Sec.~\ref{sec:main_theorem} presents our main result, the minimal decoherence theorem, which forms the basis for the remainder of the paper. In Sec. \ref{sec:applications} we demonstrate the power of our result by applying it to several different scenarios. We first recover the famous relation $T_2 \leq 2 T_1$ for a qubit, and provide a generalisation of this inequality to $d$-dimensional systems. We then prove a relationship linking the complete loss of coherence to the existence of a unique stationary population, demonstrating that when the population does not relax to a unique fixed point there exist processes that indefinitely support coherence, despite non-trivial interaction with the environment and the absence of memory effects. This is followed by an investigation of non-Markovianity as a resource for coherence manipulation and an analysis of the role of non-Markovianity in the resource theory of thermal operations. We then show how observed coherences and populations can be used to witness non-Markovian behaviour in the evolution of a quantum system. Finally, we apply our result to relate the resource-theoretic formulation of quantum thermodynamics to the standard approach of open system dynamics, and to obtain tighter and physically more relevant bounds on the possible transformations under thermal operations. Overall conclusions are then given in Sec.~\ref{sec:conclusions}.

\section{Elementary example}
\label{sec:elementaryexample}

Before we give a formal statement of the systems studied in this paper, let us present an elementary example to give a flavour of our investigation. Consider a qubit system initially described by the density operator
\small
\begin{equation}
\label{eq:initialstate}
\rho(0)= \left[ \begin{array}{ccc}
	p(0) & c(0)  \\
	c^*(0) & 1-p(0)  
\end{array} \right],
\end{equation}
\normalsize
where $p(0)$ is the initial population of the ground state $\ket{0}$, and $c(0)$ is the initial quantum coherence between states $\ket{0}$ and $\ket{1}$. Now, assume that the system is in contact with an environment at thermal equilibrium. Under typical assumptions concerning the strength of interaction and environmental relaxation times (which will be discussed in more detail in Sec.~\ref{sec:physical}), the system evolves according to the Bloch equations~\cite{breuer2002open}
\small
\begin{equation}
\renewcommand*{\arraystretch}{2.5}
\label{eq:elementary}
\left\{
\begin{array}{lll}
\dfrac{dp}{dt} &=& L_{0|0} p(t) + L_{0|1} (1-p(t)),\\
\dfrac{d|c|}{dt} &=& - \gamma |c(t)|.
\end{array}
\right.
\end{equation}	
\normalsize
The populations transition rates satisfy $\sum_{x'} L_{x'|x} =0$ and $L_{x'|x}\geq 0$ for $x' \neq x$, whilst $\gamma\geq 0$ represents the decoherence rate.\footnote{Note that we have ignored the evolution of the phase of the off-diagonal term, $\arg c(t)$, since for the sake of our discussion we need only focus on the absolute value of the coherence term, $|c(t)|$.} 

Solving Eq.~\eqref{eq:elementary}, one finds
\small
\begin{equation}
\label{eq:elementary_sol}
\renewcommand*{\arraystretch}{1.5}
\left\{
\begin{array}{lll}
p(t)&=&\pi + (p(0) - \pi)e^{-t/T_1},\\
|c(t)|&=&e^{-t/T_2}|c(0)|,
\end{array}
\right.
\end{equation}
\normalsize
where $T_1 = 1/(|L_{0|0}| + L_{0|1})$ is the relaxation time, \mbox{$T_2 = 1/\gamma$} is the decoherence time and 
\begin{equation*}
\pi = \lim_{t\rightarrow \infty}p(t) =  L_{0|1}T_1
\end{equation*}
is the stationary ground state population. 

\begin{figure}
	\includegraphics[width=0.85\columnwidth]{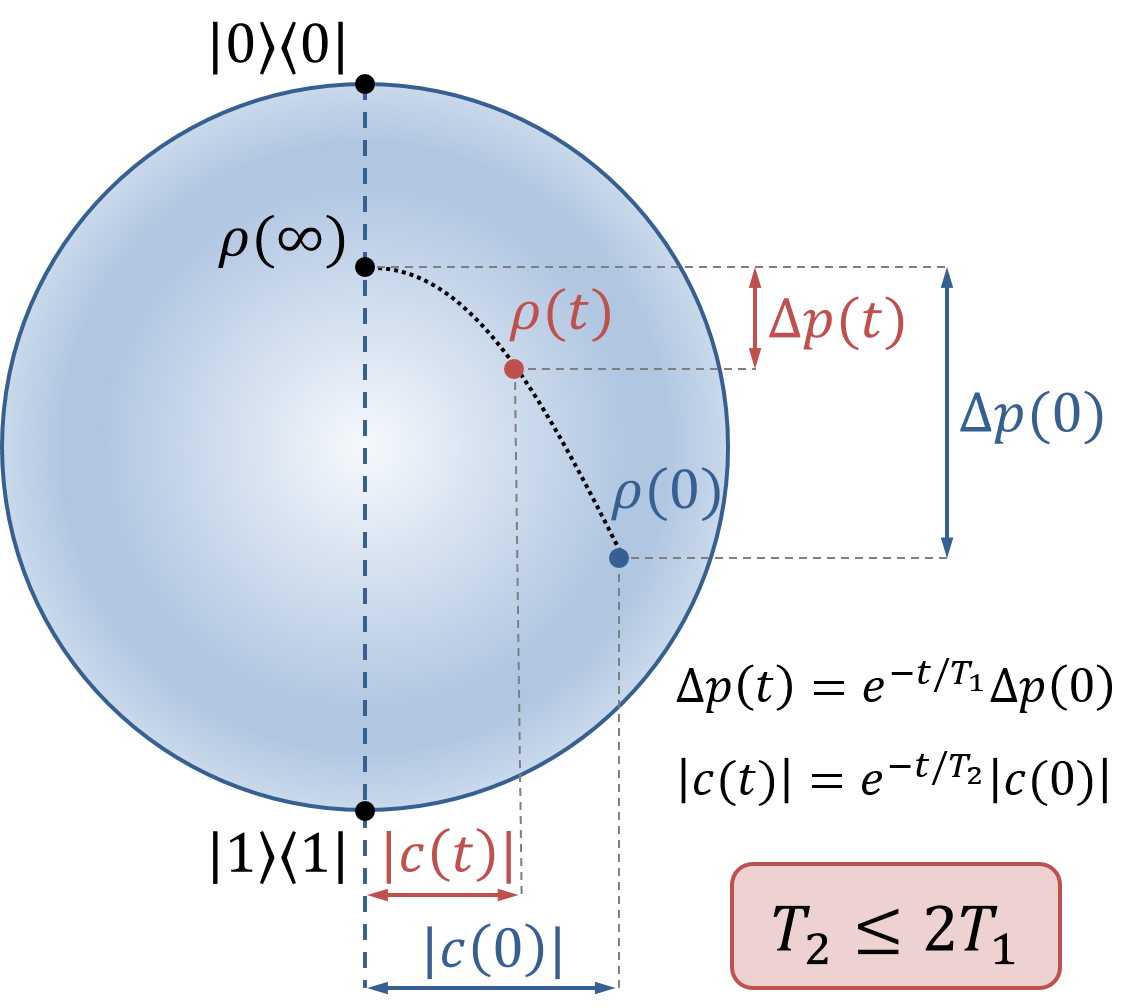}
	\caption{\label{fig:example} \emph{Elementary Example.} The evolution of the initial qubit state $\rho(0)$ towards the stationary state $\rho(\infty)$ presented on the Bloch sphere. During the evolution the difference between the current population and the stationary population, $\Delta p(t)=|p(t)-\pi|$, must decrease. Due to a constraint linking relaxation and decoherence processes, at any time the ratio between the current and initial coherence, $|c(t)|/|c(0)|$, is bounded by $\sqrt{\Delta p(t)/\Delta p(0)}$.}
\end{figure}

Crucially, the population relaxation process (described by $T_1$) and the decoherence process (described by $T_2$) are not independent. Loosely speaking, the reason for this is that every initial state must be mapped to a valid density matrix at all times, i.e., to a unit trace, positive semi-definite operator. More formally, one requires complete positivity of the map $\E$ that describes the evolution given in Eq.~{\eqref{eq:elementary}}, which in turn sets a general constraint linking the relaxation and decoherence times. In order to see this, note that complete positivity of $\E$ is equivalent to positivity of the Choi operator $J[\mathcal{E}]$~\cite{jamiolkowski1972linear,choi1975completely} (refer to Appendix~\ref{appendix:covariant} for details):
\small
\begin{equation*}
\renewcommand*{\arraystretch}{1.5}
J[\mathcal{E}] =\frac{1}{2}
\left[ \begin{array}{cccc}
	P_{0|0}(t) & e^{-t/T_2}  & 0 & 0  \\
	e^{-t/T_2} & P_{1|1}(t) & 0 & 0  \\
	0 & 0 & 1- P_{0|0}(t) & 0 \\
	0 & 0 & 0 & 1- P_{1|1}(t)
\end{array} \right],
\end{equation*}
\normalsize
where $P_{x'|x}(t)$ are the elements of population transition matrix \mbox{$P(t) = e^{L t}$}. Positivity of the Choi operator, $J[\mathcal{E}]\geq 0$, is thus equivalent to 
\begin{equation}
\label{eq:choi_positivity}
e^{-2t/T_2} \leq P_{0|0}(t) P_{1|1}(t)
\end{equation} 
at all times. A necessary condition for this is that the above inequality holds as $t \rightarrow 0$, which results in \mbox{$T_2 \leq 2 T_1$}. That this constraint is sufficient can be verified by substituting $T_2=2T_1$ into Eq.~\eqref{eq:choi_positivity} and checking directly that the inequality is satisfied at all times. This constraint, together with Eq.~\eqref{eq:elementary_sol}, links possible evolution of coherence with the evolution of population (see Fig.~\ref{fig:example}). 

In this elementary example it was possible to derive the relation between $T_1$ and $T_2$ by completely solving the dynamics. Of course, it becomes much harder to approach such a problem analytically beyond this simplest case of a qubit system. Moreover, $T_1$ and $T_2$ times are only properly defined when we deal with just two degrees of freedom. The aim of this work is to introduce a suitable theory that overcomes these limitations and allows us to study the links between population relaxation and decoherence processes for any finite-dimensional system. In so doing, we will see that the relation $T_2 \leq 2T_1$ is in fact a consequence of the time-translation symmetry of the dynamics.

\section{Setting the scene}
\label{sec:assumptions}

\subsection{Assumptions and the resulting structure}

Given a closed system described by a density operator $\rho$, its most general evolution is described by a unitary generated by some Hamiltonian $H$:
\begin{equation}
\label{eq:rho_closed_evol}
\frac{d\rho}{dt} = - i \H(\rho),
\end{equation}
where $\H(\rho):=[H,\rho]$. In many circumstances, however, the system is \emph{open}, i.e., it interacts with a generally large environment whose full quantum description is unmanageable. In this case one can still hope to describe the evolution of the system alone by means of a generalisation of Eq.~\eqref{eq:rho_closed_evol}. A standard way to do so is through the formalism of master equations~\cite{breuer2002open}. Within this work we assume that the evolution of quantum systems satisfies the following assumption: 

\begin{enumerate}[label=(A1)]
	\item \label{ass:markov}\emph{Markovianity}. The time evolution of the density operator $\rho$ is described by
	\begin{equation}
		\label{eq:rho_open_evol}
		\frac{d\rho}{dt} = - i \H(\rho) + \mathcal{L}(\rho),
	\end{equation}
	where, in addition to unitary evolution according to the Hamiltonian, we also have dissipative evolution governed by the \emph{Lindbladian} $\mathcal{L}$, whose general form was given in Refs.~\cite{lindblad1976generators,gorini1976completely},\footnote{Often in the literature $\L$ is called the \emph{dissipator} \cite{breuer2002open}, and $-i \H + \L$ is called the Lindbladian.}
	\begin{equation}
		\label{eq:lindbladian}
		\L(\cdot)=\A(\cdot)-\tfrac{1}{2}\{\A^{\dag}(\iden),\cdot\}.
	\end{equation}
	Here, $\A$ is a completely positive (CP) map, $\A^\dag$ denotes the adjoint of $\A$ with respect to the Hilbert-Schmidt inner product, \mbox{$\tr{}{\rho\A(\sigma)}=\tr{}{\A^{\dag}(\rho)\sigma}$}, and $\{\cdot,\cdot\}$ stands for the anticommutator. We denote the formal solution of Eq.~\eqref{eq:rho_open_evol} by
	\begin{equation}
	\label{eq:explind}
	\mathcal{E}_t(\rho):= e^{(-i\mathcal{H} + \mathcal{L})t}(\rho),	
	\end{equation} 
	with \mbox{$e^{-i\mathcal{H}t}(\rho)=e^{-i H t} \rho e^{iHt}$}. 
\end{enumerate}
	More technically, Eqs.~\eqref{eq:rho_open_evol}-\eqref{eq:lindbladian} describe the general evolution satisfying a semigroup property, which means that for any times $t_1$ and $t_2$ we have $\E_{t_1}(\E_{t_2}(\cdot))=\E_{t_1+t_2}(\cdot)$. Equivalently, the master equation has fixed and positive rates. There exists a wide range of axiomatic as well as microscopic derivations of this equation~(see Ref.~\cite{breuer2002open}). Here, we only remind the reader that~\ref{ass:markov} is linked to the fact that the environment is memoryless at the relevant timescales, which microscopically can be derived by assuming weak-coupling, sufficiently fast decay of environmental correlation functions and coarse-graining of time~\cite{breuer2002open}. Let us also note that evolutions satisfying~\ref{ass:markov} are sometimes referred to as  \emph{time-independent} or \emph{time-homogeneous} Markovian dynamics~\cite{rivas2014quantum}. 
	
The only other assumption that will be made in this work is that the dynamics are symmetric under time translations:
	\begin{enumerate}[label=(A2)]
	\item \label{ass:covariance}\emph{Time-translation symmetry.}  Each channel $\mathcal{E}_t$ is symmetric under time translations, which means that for every $s \in \mathbb{R}$ and $\rho$ we have
		\begin{equation}
			\label{eq:covariance}
			\mathcal{E}_t\left(e^{-i\mathcal{H}s}(\rho)\right) = e^{-i\mathcal{H}s} \left(\mathcal{E}_t(\rho)\right).
		\end{equation}
	\end{enumerate}
Since such channels are often called \emph{time-translation covariant}, for the sake of brevity we will sometimes simply refer to them as covariant channels. Another convention used in the literature is to call them \emph{phase-insensitive}~\cite{marvian2014modes}, as $e^{i\mathcal{H}s} \mathcal{E}_t e^{-i\mathcal{H}s} = \mathcal{E}_t$.

Note that, given \ref{ass:markov}, the assumption of time-translation symmetry \ref{ass:covariance} can be conveniently rewritten as a condition involving only the Lindbladian. Namely, for every $\rho$ we have
	\begin{equation}
	\label{eq:covariantwhenmarkov}
	\L(\H(\rho)) = \H(\L(\rho)).
	\end{equation}
Eq.~\eqref{eq:covariantwhenmarkov} lies at the core of how the symmetry properties of the dynamics translate into the symmetry of the corresponding generator $\L$. Let us make this more explicit. We identify the Hermitian operator $H$ as the Hamiltonian of a $d$-dimensional system (note that formally $H$ is the system Hamiltonian renormalised by the system-reservoir coupling~\cite{breuer2002open}). We further assume that $H$ is non-degenerate,
	\begin{equation}
	H=\sum_{x=0}^{d-1} \hbar \omega_x \ketbra{x}{x},
	\end{equation}
	and define the Bohr spectrum of $H$ as the set of all transition frequencies defined by the energy eigenvalues of $H$. In other words, it is the set $\{\omega\}$ such that there exist $\omega_x$ and $\omega_y$ in the spectrum of $H$ with $\omega_{xy}:=\omega_x - \omega_y = \omega$. Each $\omega$ denotes a particular \emph{mode}, consisting of matrix elements $\ketbra{x}{y}$ for which $\omega_{xy}=\omega$~\cite{marvian2014modes}. Now, the symmetry condition of Eq.~\eqref{eq:covariantwhenmarkov} enforces the Lindbladian to have a particular \emph{mode structure} dependent on $\{\omega\}$. As we explain in detail in Appendix~\ref{appendix:markovian} (using tools introduced in Appendix~\ref{appendix:covariant}), the matrix elements of a superoperator $\L$ vanish,
	\begin{equation}
	\bra{x'} \L(\ketbra{x}{y})\ket{y'}=0,
	\end{equation}
	unless $\omega_{xy}=\omega_{x'y'}$.  
	
	As a result, the evolution of populations (diagonal terms $\ketbra{x}{x}$ of a density matrix $\rho$) decouples from the evolution of coherences (off-diagonal terms $\ketbra{x}{y}$ of $\rho$), and the latter one can also be divided into independently evolving modes. To be more precise, let us first introduce the \emph{vector of populations} $\v{p}$ with components defined by \mbox{$p_x:=\rho_{xx}$}. Now, the evolution of $\v{p}(t)$ is fully described by the \emph{population transfer rate} matrix $L$,
	\begin{equation}
	\label{eq:motionforpopulation}
	\frac{d\v{p}}{dt}=L \v{p},
	\end{equation} 
	where the matrix elements of $L$ are given by
	\begin{equation}
		L_{x'|x} = \bra{x'} \L(\ketbra{x}{x})\ket{x'}.
	\end{equation}
	Note that since $\v{p}(t)=e^{Lt}\v{p}(0)$, the matrix $L$ is a generator of a stochastic matrix satisfying $L_{x|x}\leq 0$ and $\sum_{x'} L_{x'|x} = 0$ for all $x$~\cite{davies2010embeddable}.

\subsection{Physical significance of the symmetry condition}
\label{sec:physical}

The significance of the symmetry assumption~\ref{ass:covariance} may initially seem unclear since, despite its broad applicability, it is often referred to differently in different fields, and it is sometimes hidden within various physical approximations. Therefore, we will now briefly analyse the motivation behind it (see also Ref.~\cite{bartlett2007reference} and Sec.~IIIB of Ref.~\cite{marvian2016quantify}):
\begin{enumerate}
	\item Within quantum optics, time-translation symmetry is a consequence of the \emph{rotating-wave approximation} (RWA). This corresponds to manipulating the interaction Hamiltonian by discarding the so-called counter-rotating terms, which are those which rotate with higher frequency in the interaction picture. A typical example is the Jaynes-Cummings Hamiltonian, which in the simplest case reads \mbox{$H_{JC} \propto (\sigma_+ + \sigma_-)\otimes (a + a^\dag)$}, with $\sigma_{\pm}$ denoting qubit raising/lowering operators and $a,a^\dagger$ being bosonic annihilation and creation operators. In this case the approximation discards the terms $\sigma_-\otimes a$ and $\sigma_+ \otimes a^\dagger$, which leads to a master equation satisfying~\ref{ass:covariance}.
	\item Within the general theory of open quantum systems, \ref{ass:covariance} is called the \emph{secular approximation}. The secular approximation coincides with discarding terms in the Lindbladian that would prevent the commutation relation specified by Eq.~\eqref{eq:covariantwhenmarkov} from being satisfied. It is in fact a ``safer'' way of implementing the RWA \cite{fleming2010rotating} and is broadly used in applications~\cite{breuer2002open}. Another common instance in which Eq.~\eqref{eq:covariantwhenmarkov} holds is when one applies the RWA after Born-Markov approximation (see Sec.~3.3 of Ref.~\cite{breuer2002open}).

	\item In quantum metrology, consider the task of estimating the phase $\phi$ of a unitary generated by the Hamiltonian $H$, $U_\phi = e^{-i H \phi}$. Then assumption~\ref{ass:covariance} identifies the set of quantum channels $\{\mathcal{E}\}$ that degrade \emph{any} metrological resource $\rho$, i.e., for every $\rho$, optimal phase estimation using $\mathcal{E}(\rho)$ performs worse than optimal phase estimation using $\rho$~\cite{marvian2016quantify}.
	\item In quantum information, Eq.~\eqref{eq:covariance} coincides with the set of channels that can be performed in the absence of a reference frame for time \cite{bartlett2007reference}, or in the presence of a superselection rule for particle number. A dual perspective comes from the theory of $U(1)$-asymmetry, which is a resource theory where Eq.~\eqref{eq:covariance} defines the set of free operations~\cite{marvianthesis}. This is in fact a resource theory of quantum coherence in the basis defined by $H$ \cite{marvian2016quantify}. Time-translation covariance can also be linked to a global conservation law on energy \cite{keyl1999optimal} (see Theorem~25 of Ref.~\cite{marvianthesis}) and it is one of the defining properties of thermal operations \cite{lostaglio2015description}.
	\item In the study of quantum speed limits, the set of channels $\{\E\}$ covariant with respect to $H$ cannot increase the speed of evolution of any state under $H$. More precisely, the distinguishability between any state $\rho$ and its evolved version, $e^{-i\mathcal{H}t}(\rho)$, is lower-bounded by that between $\E(\rho)$ and $e^{-i\mathcal{H}t}(\E(\rho))$~\cite{marvian2016quantum}. 
\end{enumerate}

\section{Minimal decoherence theorem}
\label{sec:main_theorem}
The main result of this paper is to identify the optimal coherence preservation compatible with a given evolution of populations $\v{p}(t)$. The result is a sole consequence of the time-translation symmetry of the Lindbladian, as described by Eq.~\eqref{eq:covariantwhenmarkov}. More precisely, for a given  population transfer rate matrix $L$ we provide a bound that tells us what is the optimal amount of coherence that can be preserved in a state at time $t$. For notational convenience let us parametrise the matrix elements of a density matrix $\rho$ in the energy eigenbasis in the following way: \mbox{$\rho_{xy}=|\rho_{xy}|\vartheta_{xy}$}, where $\vartheta_{xy}$ is a phase factor, $|\vartheta_{xy}|=1$. We also define damping rates  \mbox{$\gamma_{x'y'}:=(|L_{x'|x'}| + |L_{y'|y'}|)/2$}, transport rates $t^{x'|x}_{y'|y}:= \sqrt{L_{x'|x} L_{y'|y}}$ and introduce the symbol $\sum^{(\omega)}_{x,y}$ to indicate the sum over indices of a mode $\omega$, i.e., $x,y$ such that $\omega_{xy}=\omega$. Then, we have: 

\begin{restatable}{thm}{markovianbound}
	\label{thm:markovian_bound}
	Let $\widetilde{\rho}_{x'y'}(t)$ be the solution of
	\begin{equation}
	\label{eq:boundrate}
	\frac{d\widetilde{\rho}_{x'y'}}{dt} = -\gamma_{x'y'} \widetilde{\rho}_{x'y'}  + \sum_{\mathclap{\substack{x\neq x'\\y \neq y'}}}^{\ompp{x}{y}} t^{x'|x}_{y'|y} \, \widetilde{\rho}_{xy},
	\end{equation}
	with $\widetilde{\rho}_{x'y'}(0) = |\rho_{x'y'}(0)|$. Then, if the time evolution of $\rho$ satisfies assumptions \ref{ass:markov}-\ref{ass:covariance} with population transfer rate matrix $L$, we have
	\begin{equation}
		\label{eq:bound}
	    |\rho_{x'y'}(t)| \leq \widetilde{\rho}_{x'y'}(t),
	\end{equation}
	for all $t \geq 0$.	Moreover, the bound can be saturated for all elements of a mode $\omega$ if for every $x',y',x,y$ with \mbox{$\omega_{x'y'} = \omega_{xy}=\omega$} one has 
				\begin{equation}
				\label{eq:phasematching}
				\vartheta_{x'y'}(0) \vartheta^*_{xy}(0) = \vartheta_{x'x}(0) \vartheta^*_{y'y}(0).
				\end{equation}

\end{restatable}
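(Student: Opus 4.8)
The plan is to reduce the matrix problem to a scalar differential inequality for each coherence magnitude $|\rho_{x'y'}(t)|$ within a fixed mode, and then to invoke a comparison principle. First I would use the structure theory of covariant semigroups (Holevo~\cite{holevo1993note}, and Appendices~\ref{appendix:covariant}--\ref{appendix:markovian}): under \ref{ass:markov}--\ref{ass:covariance} the dissipator admits a GKLS representation $\L(\rho)=\sum_k\big(V_k\rho V_k^\dagger-\tfrac12\{V_k^\dagger V_k,\rho\}\big)$ with each jump operator a \emph{mode operator}, $[H,V_k]=\hbar\nu_k V_k$ for some $\nu_k$ in the Bohr spectrum. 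Since $H$ is non-degenerate, the unitary part of the generator (including any Hamiltonian correction commuting with $H$) is diagonal and hence contributes only a purely imaginary multiple of $\rho_{x'y'}$ to $\dot\rho_{x'y'}$, which drops out of $\tfrac{d}{dt}|\rho_{x'y'}|=\Re(\vartheta^*_{x'y'}\dot\rho_{x'y'})$; so it suffices to analyse the dissipator. Securing this mode-operator representation is the main structural input, and it is where Holevo's theorem and the appendices do the real work.

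Next I would write $\dot\rho_{x'y'}=\bra{x'}\L(\rho)\ket{y'}$ out in components. In the gain term $\sum_k\sum_{x,y}(V_k)_{x'x}(V_k)^*_{y'y}\rho_{xy}$, non-degeneracy forces $(V_k)_{x'x}\ne0$ only if $\omega_{x'x}=\nu_k$ and $(V_k)^*_{y'y}\ne0$ only if $\omega_{y'y}=\nu_k$, so every surviving term has either $x=x',\,y=y'$ or $x\ne x',\,y\ne y'$ — the ``mixed'' terms vanish — and, using $\omega_{x'y'}=\omega_{xy}\Leftrightarrow\omega_{x'x}=\omega_{y'y}$, the surviving off-diagonal ones stay inside the mode $\omega=\omega_{x'y'}$. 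The $x=x',y=y'$ piece combines with the loss term $-\tfrac12\sum_k\big((V_k^\dagger V_k)_{x'x'}+(V_k^\dagger V_k)_{y'y'}\big)\rho_{x'y'}$; applying $\Re(\vartheta^*_{x'y'}\cdot)$ and the AM--GM bound $\Re\sum_k(V_k)_{x'x'}(V_k)^*_{y'y'}\le\tfrac12\sum_k\big(|(V_k)_{x'x'}|^2+|(V_k)_{y'y'}|^2\big)$ collapses it to at most $-\gamma_{x'y'}|\rho_{x'y'}|$, because $\sum_k\sum_{w\ne z}|(V_k)_{wz}|^2=\sum_{w\ne z}L_{w|z}=|L_{z|z}|$. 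For each surviving off-diagonal term Cauchy--Schwarz gives $\big|\sum_k(V_k)_{x'x}(V_k)^*_{y'y}\big|\le\sqrt{\sum_k|(V_k)_{x'x}|^2}\,\sqrt{\sum_k|(V_k)_{y'y}|^2}=\sqrt{L_{x'|x}L_{y'|y}}=t^{x'|x}_{y'|y}$. Collecting everything, $\tfrac{d}{dt}|\rho_{x'y'}|\le-\gamma_{x'y'}|\rho_{x'y'}|+\sum^{(\omega)}_{x\ne x',\,y\ne y'}t^{x'|x}_{y'|y}|\rho_{xy}|$, valid for a.e.\ $t$ (the modulus is Lipschitz, and at a zero of $\rho_{x'y'}$ the inequality is immediate since the right-hand side is non-negative).

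I would then close the bound with a comparison lemma. Arranging $\{|\rho_{x'y'}(t)|:\omega_{x'y'}=\omega\}$ into a vector $\v u(t)$, the above reads $\dot{\v u}\le A\v u$ componentwise a.e., where $A$ has diagonal entries $-\gamma_{x'y'}$ and non-negative off-diagonal entries $t^{x'|x}_{y'|y}$; thus $A$ is a Metzler matrix and $e^{At}$ is entrywise non-negative. Since the vector $\v{\widetilde\rho}(t)$ of solutions of Eq.~\eqref{eq:boundrate} obeys $\dot{\v{\widetilde\rho}}=A\v{\widetilde\rho}$ with $\v{\widetilde\rho}(0)=\v u(0)$, the standard comparison principle for Metzler generators — choose $c$ with $A+cI\ge0$ entrywise so that $\tfrac{d}{dt}\big(e^{ct}(\v{\widetilde\rho}-\v u)\big)\ge(A+cI)e^{ct}(\v{\widetilde\rho}-\v u)$, with an $\varepsilon$-regularisation to start strictly positive — yields $\v u(t)\le\v{\widetilde\rho}(t)$ for all $t\ge0$, which is Eq.~\eqref{eq:bound}. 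The only analytic subtlety here is the non-smoothness of $|\rho_{x'y'}(t)|$ at its zeros, which is routine but must be handled.

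For saturation I would exhibit an explicit generator: one mode operator per non-zero Bohr frequency, $V_\nu:=\sum_{(x',x):\,\omega_{x'x}=\nu}\sqrt{L_{x'|x}}\,\vartheta_{x'x}(0)\,\ketbra{x'}{x}$ (taking an arbitrary unit phase whenever the corresponding initial coherence vanishes), and $\L(\rho)=\sum_{\nu\ne0}\big(V_\nu\rho V_\nu^\dagger-\tfrac12\{V_\nu^\dagger V_\nu,\rho\}\big)$. This is a legitimate, covariant Lindbladian (each $V_\nu$ has frequency $\nu$), and a direct computation shows it reproduces the prescribed population matrix $L$. Because there is a single jump operator per frequency the Cauchy--Schwarz step is automatically tight, and because there is no $\nu=0$ operator the $x=x',y=y'$ gain term vanishes and the dephasing rate is exactly $\gamma_{x'y'}$; finally, Eq.~\eqref{eq:phasematching} is precisely the statement that $(V_\nu)_{x'x}(V_\nu)^*_{y'y}\,\vartheta_{xy}(0)=t^{x'|x}_{y'|y}\,\vartheta_{x'y'}(0)$ for all the relevant pairs, so each $\rho_{x'y'}(t)$ keeps its initial phase $\vartheta_{x'y'}(0)$ and $|\rho_{x'y'}(t)|$ then satisfies Eq.~\eqref{eq:boundrate} with equality. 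I expect the genuine obstacle in this direction to be the phase bookkeeping: one must verify that the single global phase assignment read off from the arguments of the initial coherences is simultaneously consistent with every transport channel feeding $\rho_{x'y'}$, and it is exactly the cocycle-type identity~\eqref{eq:phasematching} that makes this possible.
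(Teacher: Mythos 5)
Your proposal is correct and follows essentially the same route as the paper's proof: the covariant GKLS/mode-operator structure yields the componentwise differential inequality (your Cauchy--Schwarz over jump operators is the same bound as the paper's Choi-positivity condition $|A^{x'|x}_{y'|y}|\leq\sqrt{A_{x'|x}A_{y'|y}}$, combined with the same AM--GM step giving the decay rate $\gamma_{x'y'}$), a Metzler-type comparison lemma closes the bound (the paper proves this as Lemma~\ref{lem:differentialinequality}), and your saturating construction with one jump operator per non-zero Bohr frequency and phases $\vartheta_{x'x}(0)$ is exactly the paper's optimal Kraus choice, with the phase-matching condition playing the same role of keeping each $\vartheta_{x'y'}$ constant. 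Your explicit handling of the non-differentiability of $|\rho_{x'y'}|$ at zeros is a minor refinement over the paper's presentation, not a different method.
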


Eq.~\eqref{eq:phasematching} will be referred to as the \emph{Markovian phase-matching} condition for the initial state. We point out that pure states, and also mixed states for which amplitudes share a common phase (i.e., $\vartheta_{xy}=\vartheta$ for all $x$ and $y$), satisfy this condition for all modes. Moreover, the Markovian phase-matching condition is also satisfied independently of the initial state for modes consisting of a single element or two overlapping elements, i.e., $\rho_{xy}$ and $\rho_{x'y'}$ with $x=y'$ (see also Sec.~\ref{sec:coh_pres} for more details on overlapping elements). Finally, it is crucial to note that the evolution of $|\rho_{xy}|$ only depends on elements $|\rho_{x'y'}|$ with $\omega_{x'y'}=\omega_{xy}$ (recall that \mbox{$\omega_{xy}=\omega_x-\omega_y$}), which reflects the mode structure of the time-translation symmetric Lindbladian.

Physically, Theorem~\ref{thm:markovian_bound} demonstrates a combination of decay and transport phenomena, corresponding to each of the two terms in Eq.~\eqref{eq:boundrate} that contribute to the evolution of $\rho_{x'y'}$. The first is a \emph{decay term}, proportional to the amount of coherence $\rho_{x'y'}$ itself. If only this term were present then we would obtain a familiar exponential damping of coherence (with rate $\gamma_{x'y'}$), due to the presence of the dissipative environment. The extra contributions to the evolution of $\rho_{x'y'}$ are \emph{transport terms}. Only coherence elements $\rho_{xy}$ that rotate with the same frequency as $\rho_{x'y'}$ (i.e., belong to the same mode of coherence) can contribute, as indicated by the restricted summation. This ``selection rule'' is imposed by the underlying time-translation symmetry. The transport terms themselves have a suggestive physical interpretation. Namely, $L_{x'|x}$ is the transfer rate of the classical process that maps the energy state $x$ into $x'$, so $L_{x'|x}p_x(t)dt$ gives the population flow from $x$ to $x'$ between times $t$ and $t+dt$. The transfer of coherence from $\rho_{xy}$ to $\rho_{x'y'}$ involves a superposition of \emph{two} classical processes: the mapping of $x$ into $x'$ and of $y$ into $y'$. The optimal transport of coherence from $\rho_{xy}$ to $\rho_{x'y'}$ is characterised by the geometric mean of the transition rates of these two classical processes, i.e., $\sqrt{L_{x'|x} L_{y'|y}} \rho_{xy}(t) dt$.

The proof of Theorem~\ref{thm:markovian_bound} can be found in Appendix~\ref{appendix:proof} (however, refer also to Appendix~\ref{appendix:covariant}, where a step-by-step analysis of the structure of covariant maps can be found). In the next section we will proceed to present consequences and applications of the above theorem. However, let us first compare the bound on coherence processing under covariant Markovian dynamics, as specified by Theorem~\ref{thm:markovian_bound}, with the bound on general covariant dynamics, valid even when the Markovianity assumption is dropped. Similarly to the Markovian case, the evolution of populations under a general covariant map $\E$ is independent from the evolution of coherences. It may be described using the \emph{population transfer} matrix $P$ according to
\begin{equation}
\v{p}(t)=P\v{p}(0),
\end{equation}
where the matrix elements of $P$ are conditional probabilities given by 
\begin{equation}
P_{x'|x}=\bra{x'} \E(\ketbra{x}{x})\ket{x'}.
\end{equation}
Note that, in the case of a Markovian evolution, $P=e^{Lt}$. The bound we present below was first given in Ref.~\cite{lostaglio2015quantum}, however in Appendix~\ref{appendix:covariant} we provide a novel derivation that also sheds light on the tightness of the bound (see Appendix~\ref{app:proof_of_tightness} for details). 
\begin{restatable}{thm}{nonmarkovianbound}
	\label{thm:non_markovian_bound}
	Let $\sigma=\E(\rho)$, where $\E$ is a time-translation covariant CPTP map with corresponding population transfer matrix $P$. Then
	\begin{equation}
	\label{eq:non_markovian_bound}
	|\sigma_{x'y'}| \leq \sum_{x,y}^{\ompp{x}{y}} \sqrt{P_{x'|x} P_{y'|y}}|\rho_{xy}|.
	\end{equation}
	 The following tightness conditions hold:
	\begin{enumerate}
		\item Eq.~\eqref{eq:non_markovian_bound} can be simultaneously saturated for all $x', y'$ belonging to a given mode.
		\item  Eq.~\eqref{eq:non_markovian_bound} can be simultaneously saturated for all $x', y'$ if the Bohr spectrum is non-degenerate, or if the non-Markovian phase-matching condition holds, meaning that there exists a set of phase factors $\{\phi_x\}$ such that for all $x$ and $y$ we have $\vartheta_{xy}=\phi_x\phi_y^*$.
	\end{enumerate}
\end{restatable}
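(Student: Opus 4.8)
The plan is to work at the level of Kraus operators, exploiting the mode structure that time-translation covariance imposes on them. Any covariant CPTP map $\E$ admits a Kraus decomposition $\E(\cdot)=\sum_\alpha K_\alpha(\cdot)K_\alpha^\dagger$ in which every $K_\alpha$ is a \emph{mode operator}, i.e.\ $\bra{x'}K_\alpha\ket{x}=0$ unless $\omega_{x'x}$ equals a fixed frequency $\omega_\alpha$ of the Bohr spectrum; this is the channel-level analogue of the Lindbladian mode structure recalled above, and is derived in Appendix~\ref{appendix:covariant} (see also Holevo~\cite{holevo1993note}). Expanding $\sigma_{x'y'}=\sum_\alpha\sum_{x,y}\bra{x'}K_\alpha\ket{x}\,\rho_{xy}\,\overline{\bra{y'}K_\alpha\ket{y}}$, a term can survive only if $\omega_{x'x}=\omega_\alpha=\omega_{y'y}$, which is exactly $\omega_{x'y'}=\omega_{xy}$; this reproduces the restriction of the sum to the mode of $(x',y')$.

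The inequality then follows from two elementary estimates applied in sequence. The triangle inequality moves $|\rho_{xy}|$ outside the sum over $\alpha$, and Cauchy--Schwarz in $\alpha$ gives
\begin{equation*}
\sum_\alpha\bigl|\bra{x'}K_\alpha\ket{x}\bigr|\,\bigl|\bra{y'}K_\alpha\ket{y}\bigr|\leq\Bigl(\sum_\alpha\bigl|\bra{x'}K_\alpha\ket{x}\bigr|^2\Bigr)^{1/2}\Bigl(\sum_\alpha\bigl|\bra{y'}K_\alpha\ket{y}\bigr|^2\Bigr)^{1/2}.
\end{equation*}
Since $\sum_\alpha|\bra{x'}K_\alpha\ket{x}|^2=\bra{x'}\E(\ketbra{x}{x})\ket{x'}=P_{x'|x}$, the right-hand side equals $\sqrt{P_{x'|x}P_{y'|y}}$ and Eq.~\eqref{eq:non_markovian_bound} follows.

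For the tightness claims I would construct an explicit covariant channel with the prescribed $P$. Take one Kraus operator per Bohr frequency, $\bra{x'}K_{\omega_\alpha}\ket{x}=\sqrt{P_{x'|x}}\,e^{i\theta_{x'x}}$ whenever $\omega_{x'x}=\omega_\alpha$ and $0$ otherwise, with phases $\theta_{x'x}$ to be fixed. Each $K_{\omega_\alpha}$ is a mode operator so $\E$ is covariant, and $\sum_\alpha K_\alpha^\dagger K_\alpha=\iden$ because $H$ is non-degenerate (only diagonal entries survive, where column-stochasticity of $P$ gives $1$), so $\E$ is CPTP with population transfer matrix $P$. For this channel $\sigma_{x'y'}=\sum_{x,y}^{\ompp{x}{y}}\sqrt{P_{x'|x}P_{y'|y}}\,e^{i(\theta_{x'x}-\theta_{y'y})}\rho_{xy}$, so saturation reduces to choosing the free phases so that, for the relevant $(x',y')$, every term of this sum has the same argument. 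For a single fixed mode $\omega\neq 0$ this is always possible: its index pairs form a forest (the levels connected by a mode lie along an arithmetic progression of common difference $\omega$, hence form a union of paths), so $\arg\rho_{xy}=\mu_x-\mu_y$ can be solved on the mode, and $\theta_{x'x}:=-\mu_x$ makes all terms real and non-negative --- this is tightness condition~1. If the Bohr spectrum is non-degenerate each nonzero mode contains a single pair, which is automatically saturated by \emph{any} choice of phases, so all modes are saturated simultaneously; and under the non-Markovian phase-matching $\vartheta_{xy}=\phi_x\phi_y^*$ the single choice $\theta_{x'x}:=-\arg\phi_x$ aligns every term in every mode at once --- this is condition~2.

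The routine parts --- the two inequalities, the identification of $P_{x'|x}$, and the trace-preservation of the constructed channel --- are straightforward. The real content, and the step I expect to be the crux, is the simultaneous tightness over a whole mode: it hinges on recognising that the index pairs of a mode form a forest, which is precisely what allows the within-mode phase-matching equation to be solved for an arbitrary input state, with no assumption on $\rho$.
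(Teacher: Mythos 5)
Your proposal is correct and follows essentially the same route as the paper: the inequality is the positivity/Cauchy--Schwarz constraint $|C^{x'|x}_{y'|y}|\leq\sqrt{P_{x'|x}P_{y'|y}}$ of Appendix~\ref{app_sec:positivitychoi} recast in Kraus language, and your saturating channel (one Kraus operator per Bohr frequency with entries $\sqrt{P_{x'|x}}$ times free phases) is exactly the construction of Appendix~\ref{app:proof_of_tightness}, including the same phase choices in the non-degenerate and phase-matched cases. The only difference is presentational: for saturation within a single mode you fix the phases by solving a potential equation on the linear forest formed by the mode's index pairs (setting $\theta_{x'x}=-\mu_x$), which is a cleaner packaging of the iterative phase table the paper uses for overlapping elements.
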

We emphasise that the non-Markovian phase matching condition is satisfied by all pure states and all mixed states $\rho$ with $\vartheta_{xy}=\vartheta$ for all $x$ and $y$.

\section{Applications}
\label{sec:applications}

\subsection{A generalisation of $T_1$ and $T_2$ times}
\label{sec:generalisation}

We begin by studying the evolution of a system with non-degenerate Bohr spectrum, i.e., described by a Hamiltonian for which all energy differences between any two levels are distinct. For such a system any off-diagonal element $\ketbra{x}{y}$ of the density matrix is the only element in its mode. Hence the evolution $\rho_{xy}(t)$ decouples from all other elements as
\begin{equation}
\frac{d |\rho_{xy}|}{dt} = - \Re(\alpha_{xy})|\rho_{xy}|,
\end{equation}
where $\alpha_{xy}$ can be found directly from the matrix of elements $\A$, the map that generates the Lindbladian [see Eq. \eqref{eq:lindbladian} and Eq.~\eqref{eq:drho_dt_exact_nondegenerate} in Appendix~\ref{appendix:proof}]. The decoherence rate $\Re(\alpha_{xy})$ enables us to define the decoherence time for $\rho_{xy}$ as $T_2^{xy}:=1/\Re(\alpha_{xy})$. The evolution of an off-diagonal element is thus given by
\begin{equation}
\label{eq:T2_evol}
|\rho_{xy}(t)|= |\rho_{xy}(0)|e^{-t/T_2^{xy}}.
\end{equation}

Now consider the evolution of diagonal elements. In terms of the population vector, we have $\v{p}(t)=e^{Lt}\v{p}(0)$, where $L$ is the population transfer rate matrix. Denote by $\lambda_x$ an eigenvalue of $L$, with corresponding eigenvector $\v v_x$ (i.e., we have $L\v v_x=\lambda_x \v v_x$ for $x=0,\dots, d-1$). Then, for diagonalisable $L$, the population vector evolves as
\begin{equation}
\label{eq:pop_evol1}
\v{p}(t) = \sum_{x=0}^{d-1} b_x e^{\lambda_x t} \v v_x,
\end{equation}
where $b_x$ are constants determined by the initial conditions. As $L$ is the generator of a stochastic matrix, it must have a zero eigenvalue, $\lambda_0=0$. Let us assume that this zero eigenvalue of $L$ is unique (nondegenerate), with eigenvector $\v{\pi}$. As we now show, this means that the population dynamics has a unique stationary distribution $\v{\pi}$, a situation sometimes referred to as \emph{ergodic} dynamics~\cite{roga2010davies}. For all non-zero eigenvalues we must have $\Re(\lambda_x)<0$ (following directly from the Gershgorin circle theorem~\cite{gershgorin1931uber,golub2012matrix}). Hence
\begin{equation}
\label{eq:pop_evol2}
\v{p}(t) = \v\pi+\sum_{x=1}^{d-1} b_x e^{-t/T_1^x} e^{i \Im(\lambda_x)t}\v v_x,
\end{equation}
where we have defined relaxation times \mbox{$T_1^x:=1/|\Re(\lambda_x)|$}. Clearly, as $t\rightarrow\infty$, we have $\v{p}(t)\rightarrow \v\pi$, so that the system relaxes towards a unique stationary population.\footnote{In fact, our analysis follows in much the same way for the case that $L$ is not diagonalisable. In this case, $L$ must have some eigenvalue $\lambda_x$ that is \textit{defective}, i.e., has multiplicity $m>1$ but possesses fewer than $m$ linearly independent eigenvectors. To form a complete solution to the differential equation for the evolution of populations we must then use \textit{generalised eigenvectors} $\v w_x$, and the solution, Eq.~\eqref{eq:pop_evol1}, will have terms of the form $q(t) e^{\lambda_x t} \v w_x$ where $q(t)$ is a polynomial function~\cite{edwards2008elementary}. Any defective eigenvalue $\lambda_x$ must be non-zero for ergodic $L$, and we also have $\Re(\lambda_x)<0$. Hence, in Eq.~\eqref{eq:pop_evol2} as $t\rightarrow\infty$, we still obtain relaxation to a fixed population $\v \pi$. Furthermore, we may still write $T_1^x:=1/|\Re(\lambda_x)|$ as a relaxation time, and Corollary~\ref{corol:generalisedT1T2} will still hold precisely as given.}

The following result gives a direct relation between the decoherence times $T_2^{xy}$ and the relaxation times $T_1^x$.
\begin{corol}
	\label{corol:generalisedT1T2}
Consider any system with non-degenerate Bohr spectrum evolving towards a unique stationary population. Then, under assumptions \ref{ass:markov}-\ref{ass:covariance}, we have the tight bound
\begin{equation}
\label{eq:generalisedT1T2}
	\langle T_2 \rangle_h \leq \frac{d}{d-1} \langle T_1 \rangle_h,
\end{equation}
where $\langle\cdot\rangle_h$ denotes the harmonic mean over all decoherence times $T_2^{xy}$ and all relaxation times $T_1^x$.
	\begin{proof}
Theorem~\ref{thm:markovian_bound} states that the evolution of an off-diagonal element is bounded as $|\rho_{xy}(t)|\leq\widetilde\rho_{xy}(t)$, where for a non-degenerate Bohr spectrum $\widetilde\rho_{xy}$ is the solution to $d\widetilde\rho_{xy}/dt=-\gamma_{xy}\widetilde\rho_{xy}$,
with $\widetilde{\rho}_{xy}(0) = |\rho_{xy}(0)|$. Hence, $\widetilde\rho_{xy}(t)=|\rho_{xy}(0)|e^{-\gamma_{xy}t}$. Comparing with Eq.~\eqref{eq:T2_evol} we see that $e^{-t/T_2^{xy}}\leq e^{-\gamma_{xy}t}$ and so $1/T_2^{xy}\geq\gamma_{xy}$. From Theorem~\ref{thm:markovian_bound} these inequalities are tight.

Consider now the trace of the population transfer rate matrix. We have $\tr{}{L}=\sum_{x=0}^{d-1}\lambda_x$. Since $L$ is a real matrix, its eigenvalues come in complex conjugate pairs. Hence, recalling that $T_1^x=1/|\Re(\lambda_x)|$ with $\Re(\lambda_x)<0$ and $x \neq 0$, we obtain 
\begin{equation}
\label{eq:trL_1}
|\tr{}{L}|=\sum_{x=1}^{d-1}\frac{1}{T_1^x} \Rightarrow \langle T_1 \rangle_h = \frac{d-1}{|\tr{}{L}|}.
\end{equation}
Notice that we also have $|\tr{}{L}|=\sum_{x=0}^{d-1}|L_{x|x}|$. Since $\gamma_{xy}=\frac12(|L_{x|x}|+|L_{y|y}|)$ and  $\gamma_{xy}\leq 1/T_2^{xy}$, we obtain
\begin{equation}
\nonumber
|\tr{}{L}|=\frac{2}{d-1}\sum_{x>y}\gamma_{xy} \leq \frac{2}{d-1}\sum_{x>y}\frac{1}{T_2^{xy}}=\frac{d}{\langle T_2 \rangle_h},
\end{equation}
where $\langle T_2 \rangle_h$ is the harmonic mean of the $\frac12d(d-1)$ decoherence times. Finally, we substitute this inequality into Eq.~\eqref{eq:trL_1} to give the result stated. 
	\end{proof}
\end{corol}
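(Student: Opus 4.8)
The plan is to reduce the statement to two ingredients: a pointwise lower bound on each decoherence rate coming from Theorem~\ref{thm:markovian_bound}, and an identity expressing $|\mathrm{Tr}(L)|$ simultaneously through the relaxation times $T_1^x$ and through the damping rates $\gamma_{xy}$. Everything else is bookkeeping.

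First I would specialise Theorem~\ref{thm:markovian_bound} to a non-degenerate Bohr spectrum. Then every off-diagonal element $\rho_{xy}$ is the unique member of its mode, so the transport sum in Eq.~\eqref{eq:boundrate} is empty and the bounding equation collapses to $d\widetilde\rho_{xy}/dt=-\gamma_{xy}\widetilde\rho_{xy}$, giving $\widetilde\rho_{xy}(t)=|\rho_{xy}(0)|e^{-\gamma_{xy}t}$. Comparing with the genuine exponential decay $|\rho_{xy}(t)|=|\rho_{xy}(0)|e^{-t/T_2^{xy}}$ forces $1/T_2^{xy}\ge\gamma_{xy}=\tfrac12(|L_{x|x}|+|L_{y|y}|)$. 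I would also note that a single-element mode trivially satisfies the Markovian phase-matching condition~\eqref{eq:phasematching}, so each such inequality is individually saturable; the claim that they can all be saturated by one channel is what I will come back to for tightness.

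Next I would handle the population side. From $\v p(t)=e^{Lt}\v p(0)$ and $\mathrm{Tr}(L)=\sum_x\lambda_x$, together with $\lambda_0=0$, the fact that $L$ is real (so its spectrum is closed under conjugation and imaginary parts cancel in the trace), and ergodicity ($\Re(\lambda_x)<0$ for $x\ne 0$), I get $|\mathrm{Tr}(L)|=\sum_{x=1}^{d-1}|\Re(\lambda_x)|=\sum_{x=1}^{d-1}1/T_1^x$, hence $\langle T_1\rangle_h=(d-1)/|\mathrm{Tr}(L)|$; the defective case is covered by the footnote preceding the corollary. On the other hand $\mathrm{Tr}(L)=\sum_x L_{x|x}$ with all $L_{x|x}\le 0$, so $|\mathrm{Tr}(L)|=\sum_x|L_{x|x}|$, and the elementary counting identity $\sum_{x>y}(|L_{x|x}|+|L_{y|y}|)=(d-1)\sum_x|L_{x|x}|$ gives $|\mathrm{Tr}(L)|=\tfrac{2}{d-1}\sum_{x>y}\gamma_{xy}$.

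Finally I would combine: using $\gamma_{xy}\le 1/T_2^{xy}$ and the harmonic mean over the $\tfrac12 d(d-1)$ decoherence times, $\langle T_2\rangle_h=\tfrac12 d(d-1)/\sum_{x>y}1/T_2^{xy}$, one obtains $|\mathrm{Tr}(L)|\le \tfrac{2}{d-1}\sum_{x>y}1/T_2^{xy}=d/\langle T_2\rangle_h$, and substituting into $\langle T_1\rangle_h=(d-1)/|\mathrm{Tr}(L)|$ yields $\langle T_2\rangle_h\le\tfrac{d}{d-1}\langle T_1\rangle_h$. For tightness I would fix any ergodic $L$ with non-degenerate Bohr spectrum and take the maximally coherence-preserving covariant Markovian dynamics with $1/T_2^{xy}=\gamma_{xy}$, which exists here because the phase-matching condition is vacuous for single-element modes; then every inequality above becomes an equality. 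I expect the one genuinely delicate point to be this last step: verifying that a single covariant Markovian generator can saturate all the coherence bounds at once, which rests on the explicit structure of covariant Lindbladians developed in the appendix rather than on the corollary's own algebra.
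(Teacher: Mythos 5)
Your proposal is correct and follows essentially the same route as the paper's proof: the specialisation of Theorem~\ref{thm:markovian_bound} to single-element modes giving $1/T_2^{xy}\geq\gamma_{xy}$, the two expressions for $|\mathrm{Tr}(L)|$ via eigenvalues and via diagonal entries, and the final harmonic-mean bookkeeping all match. Your closing remark on tightness correctly identifies that simultaneous saturation rests on the explicit construction of the optimal covariant generator in the appendix, which is exactly how the paper handles it.
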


The tightness of Corollary~\ref{corol:generalisedT1T2} relies on saturating the bounds of Theorem~\ref{thm:markovian_bound}. In fact, in Appendix~\ref{appendix:proof} we explicitly show how one can construct $\A$ that leads to the longest possible decoherence time $T_2^{xy}=1/\gamma_{xy}$ for every coherence element; the resulting evolution achieves \mbox{$\langle T_2 \rangle_h = \frac{d}{d-1} \langle T_1 \rangle_h$.}

When we take the simplest case of a qubit, $d=2$, there is only one relaxation time and one decoherence time. Hence in Corollary \ref{corol:generalisedT1T2} there is no need to perform an average, and we obtain the well-known result \mbox{$T_2\leq2 T_1$}. For a qutrit, $d=3$, we instead obtain \mbox{$\langle T_2 \rangle_h \leq \frac32 \langle T_1 \rangle_h$}. Note that for large $d$, Corollary \ref{corol:generalisedT1T2} simply bounds the harmonic mean of decoherence times by the harmonic mean of relaxation times.

As an example application of our bound, we consider the case of thermalisation. When the population dynamics is ergodic and transfer rates satisfy the detailed balance condition (\mbox{$L_{x'|x}=L_{x|x'}e^{-\beta \hbar \omega_{x'x}}$} with \mbox{$\beta:=1/(k_B T)$} denoting the inverse temperature), the populations relax towards a thermal state, i.e., components of the stationary population are given by $\pi_x\propto e^{-\beta \hbar\omega_x}$. Since the columns of $L$ sum to zero, we may sum over all non-diagonal elements to obtain
\begin{equation*}
|\tr{}{L}|=\sum_{\mathclap{\substack{x,x'\\x \neq x'}}} L_{x'|x} = \sum_{\mathclap{\substack{x,x'\\x'<x}}} L_{x'|x} (1+e^{-\beta \hbar \omega_{xx'}}),
\end{equation*}
where in the second step we split the summation into elements with $x'<x$ and those with $x'>x$ and used the detailed balance condition. Eq.~\eqref{eq:trL_1} then gives directly an expression for $\langle T_1 \rangle_h$ and hence, according to Corollary~\ref{corol:generalisedT1T2}, a bound for the harmonic average of decoherence times. 

\subsection{Coherence preservation}

\label{sec:coh_pres}

The results of Ref.~\cite{lostaglio2015quantum}, strengthened in Theorem~\ref{thm:non_markovian_bound}, together with the tools developed here, allow us to assess the role of non-Markovianity in the preservation of coherence in the presence of a dissipative environment. Intuitively, one may expect that non-trivial Markovian processing of coherence necessarily yields deterioration of the quantum resources at hand, whereas non-Markovianity could provide an advantage. It is important to note that under assumption \ref{ass:covariance} \emph{alone}, coherence cannot be created in the system. Introducing
\begin{equation}
\label{eq:coherenceonenorm}
S_\omega (\rho) = \sum_{x,y}^{(\omega)}|\rho_{xy}|,
\end{equation}
one can show that for any quantum channel $\mathcal{E}$ satisfying \ref{ass:covariance}, one has \mbox{$S_\omega (\E(\rho)) \leq S_\omega (\rho)$} for every mode $\omega$~\cite{marvian2014modes}. 

Significantly, in Ref.~\cite{lostaglio2015quantum} it was shown that using covariant operations non-trivial processing of coherence (e.g., coherence transfer within a mode) can be performed perfectly, so that \mbox{$S_\omega (\E(\rho))=S_\omega (\rho)$}. On the other hand, typically considered noise models are Markovian~\cite{breuer2002open}. Hence, a more relevant question is: are there non-trivial covariant channels admitting a master equation description that preserve coherence indefinitely? The existence of such \emph{frozen coherence} has recently been proposed in Ref.~\cite{bromley2015frozen}, and an experimental demonstration followed shortly thereafter~\cite{silva2016observation}. Here, we formalise the question by asking what general features in the class of master equations satisfying assumption \ref{ass:covariance} allow such phenomena to arise. Our approach differs from that of Ref.~\cite{bromley2015frozen} in that here noise acts in the same basis in which the quantum information is encoded, rather than in a transversal basis. We also note that our investigation concerns quantum coherence between different energy eigenspaces, so we exclude the obvious possibility that superpositions can be preserved within decoherence-free subspaces.

Another way to phrase the question above is: do perfect (covariant) coherence manipulations necessarily require non-Markovianity? We begin to answer this question with the following result:

\begin{restatable}{corol}{attractor}
	\label{corol:attractor2}
	Under assumptions  \ref{ass:markov}-\ref{ass:covariance}, assume the population dynamics has a unique stationary distribution $\v{\pi}$ with $\pi_x \neq 0 \;\forall x$. Then 
	\begin{enumerate}
		\item \label{markoviandeterioration} For all $t' > t$, $S_\omega (\rho(t')) < S_\omega (\rho(t))$. 
		\item \label{eq:attractor2}For all $x'\neq y'$, $\underset{t \rightarrow \infty}{\lim} |\rho_{x'y'}(t)| = 0$.
	\end{enumerate}
\end{restatable}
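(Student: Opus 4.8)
<br>

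The plan is to prove both parts by exploiting the structure of the transport system in Eq.~\eqref{eq:boundrate} together with the assumption that $\v\pi$ is the unique stationary distribution with full support. For part~\ref{markoviandeterioration}, I would use the monotone $S_\omega$ as the basic object. Since $S_\omega(\rho(t'))\leq S_\omega(\rho(t))$ always holds under \ref{ass:covariance} (as recalled around Eq.~\eqref{eq:coherenceonenorm}), what must be ruled out is equality over a finite interval. The key observation is that, by Theorem~\ref{thm:markovian_bound}, $|\rho_{x'y'}(t)|\leq\widetilde\rho_{x'y'}(t)$, and the comparison-system quantity $\widetilde S_\omega(t):=\sum^{(\omega)}_{x',y'}\widetilde\rho_{x'y'}(t)$ satisfies, by summing Eq.~\eqref{eq:boundrate},
\begin{equation*}
\frac{d\widetilde S_\omega}{dt} = -\sum_{x',y'}^{(\omega)}\Big(\gamma_{x'y'} - \sum_{\substack{x\neq x'\\y\neq y'}}^{(\omega)} t^{x'|x}_{y'|y}\,\frac{\widetilde\rho_{xy}}{\widetilde\rho_{x'y'}}\Big)\widetilde\rho_{x'y'},
\end{equation*}
and the point is that the bracketed "effective damping rate" is strictly positive whenever $\v\pi$ has full support. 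Concretely, I would bound $\sum_{x'}\sqrt{L_{x'|x}L_{y'|y}}\leq\frac12\sum_{x'}(L_{x'|x}+L_{y'|y})$ (AM--GM on each term after reindexing), which after using $\sum_{x'}L_{x'|x}=0$, i.e.\ $\sum_{x'\neq x}L_{x'|x}=|L_{x|x}|$, yields that the total outgoing transport rate from $\widetilde\rho_{x'y'}$ across the mode is at most $\gamma_{x'y'}$ minus a strictly positive remainder controlled by the diagonal rates $L_{x|x}$; uniqueness of $\v\pi$ with $\pi_x\neq0$ forces every $|L_{x|x}|>0$ (otherwise $\ket{x}\bra{x}$ would be a second stationary state, or the chain would be reducible), so $\widetilde S_\omega$ is strictly decreasing, and hence so is $S_\omega(\rho(t))$ since $S_\omega(\rho)\leq\widetilde S_\omega$ with equality at the base point $t$.

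For part~\ref{eq:attractor2} I would argue that the linear system Eq.~\eqref{eq:boundrate} is asymptotically stable, i.e.\ all its eigenvalues have strictly negative real part, so $\widetilde\rho_{x'y'}(t)\to0$ and therefore $|\rho_{x'y'}(t)|\to0$ by the sandwich $0\leq|\rho_{x'y'}(t)|\leq\widetilde\rho_{x'y'}(t)$. Stability follows from the strict-Lyapunov estimate just obtained: $\widetilde S_\omega$ is a norm on the relevant coordinate block that is strictly decreasing along nonzero trajectories with a uniform linear rate (the infimum of the effective damping rates over the simplex of normalised coherence vectors is attained and positive), which forces exponential decay. Alternatively, and perhaps more cleanly, I would apply the Gershgorin circle theorem directly to the generator matrix $M$ of Eq.~\eqref{eq:boundrate} within each mode: the $(x'y')$ row has diagonal entry $-\gamma_{x'y'}$ and off-diagonal entries $t^{x'|x}_{y'|y}=\sqrt{L_{x'|x}L_{y'|y}}\geq0$, whose row sum is $\leq\gamma_{x'y'}$ strictly (by the AM--GM bound above together with $|L_{x|x}|>0$), so every Gershgorin disc lies strictly in the left half-plane and $M$ is Hurwitz. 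This mirrors precisely the argument already used for the population generator $L$ in deriving Eq.~\eqref{eq:pop_evol2}.

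The main obstacle I anticipate is the strictness of the inequality $\sum^{(\omega)}_{x,y,\,x\neq x',y\neq y'} t^{x'|x}_{y'|y}<\gamma_{x'y'}$ for \emph{every} mode element, and showing it is a genuine consequence of "unique full-support $\v\pi$" rather than an extra assumption. The subtlety is that the AM--GM step $\sqrt{L_{x'|x}L_{y'|y}}\leq\frac12(L_{x'|x}+L_{y'|y})$ is not the end of the story: when I sum $\frac12(L_{x'|x}+L_{y'|y})$ over the restricted index set ($x\neq x'$, $y\neq y'$, and $\omega_{xy}=\omega$) I am summing over a \emph{proper} subset of the pairs contributing to $|L_{x'|x'}|$ and $|L_{y'|y'}|$, so I must carefully track exactly which transitions are excluded by the mode constraint and by $x\neq x'$, $y\neq y'$, and confirm the leftover positive terms cannot all vanish. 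The clean way to see it: uniqueness of $\v\pi$ means the classical chain generated by $L$ is irreducible, hence from every state $x'$ there is some allowed transition $x'\to x''$ with $L_{x''|x'}>0$, and at least one such transition survives the mode/off-diagonal restrictions for at least one partner index — this is where a short case analysis (degenerate vs.\ nondegenerate Bohr spectrum, single-element modes, overlapping pairs) is needed, exactly as flagged in the remark following Theorem~\ref{thm:markovian_bound}. Once that strict gap is established uniformly, both parts follow immediately.
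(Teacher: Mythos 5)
There is a genuine gap, and it sits exactly where you flagged it: the strict inequality between the summed transport rates and the damping rate does \emph{not} follow from uniqueness of a full-support $\v{\pi}$, and both of your routes to it fail. The Gershgorin route is incorrect as stated: in the row of the comparison generator labelled by $(x',y')$ the off-diagonal entries $\sqrt{L_{x'|x}L_{y'|y}}$ are geometric means of rates \emph{into} $x'$ and $y'$, while the diagonal $-\gamma_{x'y'}$ is built from rates \emph{out of} $x'$ and $y'$; these are not comparable. For an ergodic qutrit with, e.g., $L_{0|1}=L_{1|0}=L_{2|1}=1$ and $L_{1|2}=100$ (all other rates zero), the row of $(1,0)$ has radius $\sqrt{L_{0|1}L_{1|2}}=10$ around the centre $-\gamma_{10}=-3/2$, so the disc reaches far into the right half-plane even though the matrix happens to be Hurwitz. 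The Lyapunov route (summing over destination indices, where AM--GM and $\sum_{x'\neq x}L_{x'|x}=|L_{x|x}|$ do apply) only yields the non-strict statement $d\widetilde S_\omega/dt\leq 0$, and the strictly positive remainder you need can vanish for an ergodic $L$: for the cyclic qutrit with $L_{1|0}=L_{2|1}=a$, $L_{0|2}=c$ and all other rates zero, the chain is irreducible with full-support stationary state, yet the coefficient of $\widetilde\rho_{10}$ in $d\widetilde S_\Omega/dt$ is exactly $-\tfrac12(a+a)+\sqrt{a\cdot a}=0$. So neither strict monotonicity of $\widetilde S_\omega$ nor Hurwitz-ness of the comparison matrix of Eq.~\eqref{eq:boundrate} is delivered by your bookkeeping; the ``short case analysis'' you defer is precisely the missing proof, and it would have to be replaced by a genuinely spectral argument, since the comparison matrix can be Hurwitz even when some of your effective damping coefficients are zero and some Gershgorin discs cross the imaginary axis.

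The paper proves the corollary by a different mechanism that bypasses the generator entirely and which your proposal does not contain. Uniqueness of the full-support fixed point is exploited at the level of the semigroup: since $P(t)\v{\epsilon}^x\rightarrow\v{\pi}$ for every basis vector, there is a finite $\tilde t$ with $P_{x'|x}(\tilde t)>0$ for \emph{all} $x,x'$. Feeding this into the non-Markovian bound of Theorem~\ref{thm:non_markovian_bound} and applying AM--GM to the conditional probabilities (whose sums over $x'$ equal one, while the mode restriction removes at least one term) gives $S_\omega(\rho(\tilde t))\leq B\,S_\omega(\rho(0))$ with $B<1$; the semigroup property then lets this contraction be iterated, which proves claim~2. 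Claim~1 follows by upgrading positivity at $\tilde t$ to positivity at every $t>0$ via L\'{e}vy's dichotomy (for a time-homogeneous semigroup, $P_{x'|x}(t)$ is either identically zero or strictly positive for all $t>0$) and repeating the contraction estimate between any two times. If you wish to salvage your generator-level approach, you would need an independent proof that ergodicity of $L$ forces the comparison matrix to be Hurwitz---plausible, but not established by AM--GM or Gershgorin---whereas the channel-level argument avoids the issue altogether.
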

The proof of the above Corollary can be found in Appendix~\ref{appendix:cor4}. Its physical meaning is clear: whenever the stochastic process generated by $L$ has a unique fixed point (with full support), such as a thermal state, coherence will eventually be destroyed. When the population finally relaxes to a stationary distribution, no coherence is left in the system. This is in stark contrast to non-Markovian covariant evolutions, where a finite fraction of coherence can always be preserved when the population reaches its fixed point $\v{\pi}$. To see this, consider a covariant channel $\E$ with the population transfer matrix $P$ defined by $P_{x'|x}=\pi_{x'}$, which transforms every initial population into $\v{\pi}$ (so its fixed point is $\v{\pi}$). The remaining parameters defining the action of $\E$ on coherence elements (see Appendix~\ref{appendix:covariant}) are
\mbox{$C^{x'|x}_{y'|y}:=\bra{x'}\E(\ketbra{x}{y})\ket{y'}$} for $\omega_{x'y'}=\omega_{xy}$. The choice $C^{x'|x}_{y'|y}=\delta_{x'x}\delta_{y'y}\sqrt{P_{x|x}P_{y|y}}$ (with $\delta_{x'x}$ denoting the Kronecker delta) guarantees that $\E$ is completely positive and results in the preservation of a fraction $\sqrt{\pi_x\pi_y}$ of every initial coherence element $\rho_{xy}$. Hence, under non-Markovian covariant dynamics some coherence can always be preserved when populations reach their fixed point $\v{\pi}$.

On the other hand, let us note that when $L$ does not have a unique fixed point, perfect \emph{Markovian} processing of coherence is possible. One such example is given by the \emph{coherence mixing} process. Consider a four-dimensional system described by Hamiltonian
\begin{equation}
\label{eq:H4}
H_4=\hbar\Omega\ketbra{1}{1}+\hbar(\Omega+\Delta)\ketbra{2}{2}+\hbar(2\Omega+\Delta)\ketbra{3}{3}.
\end{equation}
Starting with some initial values of coherence elements \mbox{$\rho_{10}(0)=|\rho_{10}(0)|$} and \mbox{$\rho_{32}(0)=|\rho_{32}(0)|$} (that belong to the same mode $\Omega$, see Fig.~\ref{fig:coh_mixing}) one may obtain an optimal evolution of coherences given by
\begin{eqnarray*}
|\rho_{10}(t)| &=& \frac{1+e^{-2\lambda t}}{2}|\rho_{10}(0)| + \frac{1-e^{-2\lambda t}}{2} |\rho_{32}(0)|,\\
|\rho_{32}(t)| &=& \frac{1-e^{-2\lambda t}}{2}|\rho_{10}(0)| + \frac{1+e^{-2\lambda t}}{2} |\rho_{32}(0)|,
\end{eqnarray*}
for some $\lambda>0$. Such a process is optimal as we have $S_{\Omega}(\rho(t))= S_{\Omega}(\rho(0))$ at all times, and it can be achieved by the following choice of $L$: 
\small
	\begin{equation*}
	L=\lambda\left[
	\begin{array}{cccc}
	-1&0&1&0\\
	0&-1&0&1\\
	1&0&-1&0\\
	0&1&0&-1
	\end{array}
	\right].
	\end{equation*}
\normalsize
Notably, we thus see that a dissipative and memoryless interaction with an environment can in some cases sustain coherence indefinitely. 

\begin{figure}
	\includegraphics[width=0.8\columnwidth]{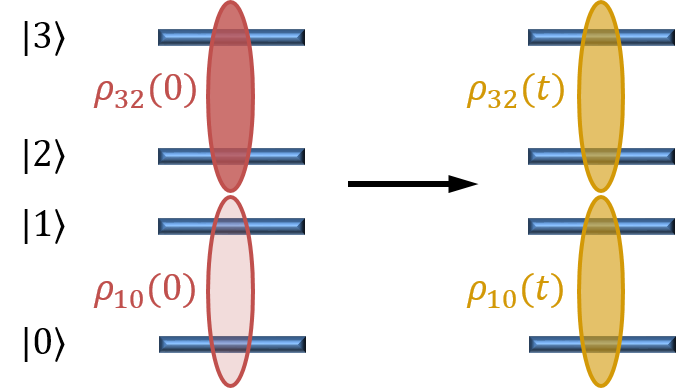}
	\caption{\label{fig:coh_mixing} \emph{Optimal coherence mixing.} Within a mode $\Omega$ (consisting of coherence elements $\rho_{10}$ and $\rho_{32}$), the optimal mixing of coherence elements can be achieved via covariant Markovian dynamics: the initial total coherence within the mode, $|\rho_{10}(0)|+|\rho_{32}(0)|$, is equal to the final total coherence, $|\rho_{10}(t)|+|\rho_{32}(t)|$, and for $t\rightarrow\infty$ we obtain $|\rho_{10}(t)|=|\rho_{32}(t)|$.}
\end{figure}

\subsection{Coherence transfer}
\label{sec:coherencetransfer}

Let us now focus on a particular type of coherence processing: coherence transfer within a mode. Consider a three-dimensional system with equidistant energy spectrum,
\begin{equation}
	H_3=\hbar\Omega(\ketbra{1}{1}+2\ketbra{2}{2}),
\end{equation}
and focus on the $\Omega$ mode composed of matrix elements $\rho_{10}$ and $\rho_{21}$. For simplicity we assume that initially only the element $\rho_{10}(0)$ of mode $\Omega$ is non-zero and we wish to maximise the final amount of coherence $\rho_{21}(t)$ (see Fig.~\ref{fig:coh_transfer}a). Similarly, consider a 4-dimensional system described by the Hamiltonian $H_4$ given in Eq.~\eqref{eq:H4}. In this case we wish to transfer coherence from $\rho_{10}$ to $\rho_{32}$ (see Fig.~\ref{fig:coh_transfer}b). Note that the only unitaries allowed by assumption \ref{ass:covariance} are energy-preserving. Hence, the only way to raise the superposition up the ladder is to extract energy from the environment. Although in both cases we deal with a mode consisting of two elements, there is an important difference. Namely, in the first case we transfer coherence between off-diagonal elements corresponding to \emph{overlapping} energy eigenstates, $\rho_{10}$ and $\rho_{21}$, whereas in the second case the transfer takes place between off-diagonal elements in \emph{non-overlapping} energy eigenstates, $\rho_{10}$ and $\rho_{32}$.

Our minimal decoherence theorem, Theorem~\ref{thm:markovian_bound}, shows that  in the first case the optimal coherence evolution achievable for a given population transfer rate $L$ is given by $d\v{c}/dt=Q\v{c}$ with	
\small
\begin{equation*}\renewcommand{\arraystretch}{1.3}
Q=\left[
\begin{array}{cc}
-\gamma_{10} & \sqrt{L_{0|1}L_{1|2}} \\
\sqrt{L_{1|0}L_{2|1}} & -\gamma_{21}
\end{array}
\right],
\end{equation*}	
\normalsize
where we introduce the coherence vector \mbox{$\v{c}:=(|\rho_{10}|, |\rho_{21}|)$}. This is a system of two first order differential equations that may be transformed into the second order differential equation
\begin{equation}
\label{eq:oscillator_equation}
\frac{d^2 c_2}{dt^2}-\mathrm{Tr}(Q) \frac{d c_2}{dt}+\det(Q) c_2=0.
\end{equation}
Since the above equation describes the motion of a damped harmonic oscillator, we see that while coherence is transferred within a mode, damping can progressively destroy it. In order to find the optimal coherence transfer we need to maximise $c_2(t)$ over all population transfer rates $L_{x'|x}$ and over all times. The solution to this problem is presented in Appendix~\ref{appendix:qutrit}, where we show that in the overlapping case, covariant Markovian evolution cannot achieve a higher coherence transfer than $\rho_{21}(t)=\rho_{10}(0)/\sqrt{2}$. This is in sharp contrast to the result for general covariant maps, where this transference task can be performed without loss of coherence, i.e., at some later $t$ we have $\rho_{21}(t)=\rho_{10}(0)$~\cite{lostaglio2015quantum}. We thus conclude that, under the covariance restriction, non-Markovianity enhances our ability to transfer coherence. 

In fact, the no-go result on Markovian transfer of coherence can also be derived directly from Corollary~\ref{corol:attractor2}. From Theorem~\ref{thm:non_markovian_bound}, any covariant process transferring coherence from an element $\rho_{10}$ to $\rho_{12}$ must have $P_{1|0}(t)>0$ and $P_{2|1}(t)>0$ at some $t>0$. Due to Markovianity this implies that the same relation holds for \emph{every} $t>0$ (see Appendix~\ref{appendix:cor4}), which gives $L_{1|0} > 0$ and $L_{2|1}>0$. However, from Corollary~\ref{corol:attractor2}, we know that perfect coherence transfer requires $L$ to have at least two zero eigenvalues. By direct inspection, one can verify that this requirement is incompatible with these two inequalities (the eigenvalues are of the form $-a\pm\sqrt{a^2-b}$ with $a>0$ and $b>0$).

\begin{figure}
	\includegraphics[width=\columnwidth]{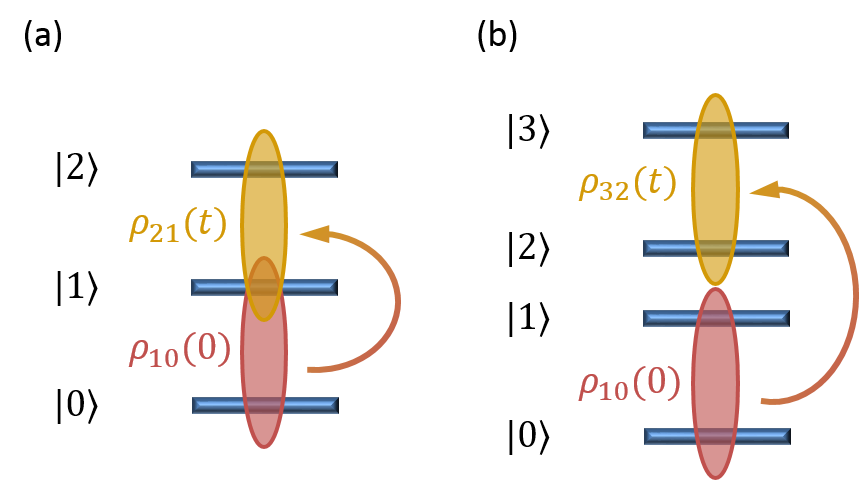}
	\caption{\label{fig:coh_transfer} (a) Coherence transfer within a mode $\Omega$ between overlapping coherence elements $\ketbra{1}{0}$ and $\ketbra{2}{1}$. (b) Coherence transfer within a mode $\Omega$ between non-overlapping coherence elements $\ketbra{1}{0}$ and $\ketbra{3}{2}$.}
\end{figure}

However, what is perhaps more surprising is that it is possible to perfectly transfer all coherence in the non-overlapping case, i.e., there exists a Markovian evolution leading to $\rho_{32}(t)=\rho_{10}(0)$ as \mbox{$t\rightarrow\infty$}. To see this, note that from Theorem~\ref{thm:markovian_bound} the optimal coherence evolution achievable for a given population transfer rate $L$ in the case of $H_4$ is again given by $d\v{c}/dt=Q\v{c}$, but this time with
\small
\begin{equation}\renewcommand{\arraystretch}{1.3}
Q=\left[
\begin{array}{cc}
	-\gamma_{10} & \sqrt{L_{0|2}L_{1|3}} \\
	\sqrt{L_{2|0}L_{3|1}} & -\gamma_{32}
\end{array}
\right],
\end{equation}
\normalsize
where the coherence vector is now $\v{c}:=(|\rho_{10}|,|\rho_{32}|)$. One can directly verify that optimal coherence transfer is achieved through the following choice of population transfer rate matrix:
\small
	\begin{equation*}
	L=\lambda\left[
	\begin{array}{cccc}
	-1&0&0&0\\
	0&-1&0&0\\
	1&0&0&0\\
	0&1&0&0
	\end{array}
	\right].
	\end{equation*}
\normalsize

This result is less surprising when we realise that the matrix $L$ that leads to a perfect transfer does not satisfy the requirements of Corollary~\ref{corol:attractor2}, i.e., it does not have a unique stationary point. Therefore, preserving coherence indefinitely within a mode through a memoryless process is possible and so, in particular, is perfect transfer within a mode. We conclude that the question of whether non-Markovianity is a resource for coherence manipulations is a subtle one that depends on the mode structure of the Hamiltonian.

\subsection{Witnessing non-Markovianity}

In this section we focus on signatures of non-Markovian dynamics. More precisely, \emph{assuming that the evolution is covariant}, we study how one can identify that the underlying dynamics is non-Markovian. We analyse two ways to achieve this: one based on monitoring the coherence of the system (which is an application of the minimal decoherence theorem), and the other on monitoring populations (which uses only the covariance condition).

\subsubsection{Coherence-based witnessing}

Let us consider a probe prepared in some state $\rho(0)$ and left in contact with an environment. What can we learn about the Markovian or non-Markovian nature of the covariant dynamics by measuring $\rho(t)$ at various times $t$? A standard approach based around the idea of ``information backflows''~\cite{breuer2016colloquium} can be applied to our scenario. Consider any distance-based measure of quantum coherence
\begin{equation}
S_D(\rho) := \min_{\sigma \in \mathcal{I}} D(\rho, \sigma),
\end{equation}
where $D$ satisfies contractivity under CPTP maps, i.e., \mbox{$D(\mathcal{E}(\rho_1),\mathcal{E}(\rho_2)) \leq D(\rho_1, \rho_2)$} if $\mathcal{E}$ is a quantum channel, and $\mathcal{I}$ is the set of states invariant under dephasing in the eigenbasis of $H$. Then, if the process is Markovian, covariance implies that for every $t'>t$ one has 
\begin{equation}
\label{eq:coherence_witness}
S_D(\rho(t')) \leq S_D(\rho(t)).
\end{equation} 
This follows directly from \mbox{$\rho(t')=\mathcal{E}_{t'-t}(\rho(t))$} and the fact that \mbox{$\rho \in \mathcal{I}$} induces \mbox{$\E(\rho) \in \mathcal{I}$}:
\small
	\begin{equation*}
	\min_{\sigma \in \mathcal{I}} D(\rho, \sigma) := D(\rho, \sigma^*) \geq D(\E(\rho), \E(\sigma^*)) \geq \min_{\sigma \in \mathcal{I}} D(\E(\rho), \sigma).
	\end{equation*}
\normalsize
Hence, violations of the inequality given in Eq.~\eqref{eq:coherence_witness} are a signature of non-Markovianity, along similar lines to Ref.~\cite{chanda2016delineating}.

An alternative approach~\cite{wolf2008assessing} assumes that we only known the initial preparation $\rho(0)$ and the final state $\rho(t)$ at a unique time $t>0$.  As our previous example of coherence transfer between overlapping coherence elements illustrates, sometimes such a single ``snapshot" can be enough to deduce non-Markovianity. To simplify the argument let us assume that the probe is a single qubit and that the dynamics has some known fixed point $\rho(\infty)$ with occupations \mbox{$\v{\pi}=(\pi,1-\pi)$} and zero coherence. What can we learn from a single snapshot of qubit dynamics? The situation is analogous to that of the elementary example from Sec.~\ref{sec:elementaryexample}. The initial state is given by Eq.~\eqref{eq:initialstate}. If the evolution is Markovian, Theorem~\ref{thm:markovian_bound} applies, leading to an optimal process described by Eq.~\eqref{eq:elementary} with $\gamma = (\abs{L_{0|0}} +\abs{L_{1|1}})/2$. Solving the equations for the optimal process one obtains
\begin{equation}
\label{eq:qubitmarkovianbound}
|c(t)| = \sqrt{\frac{p(t) - \pi}{p(0) - \pi}}|c(0)|,
\end{equation}
as shown in Fig.~\ref{fig:example} and illustrated for example initial and stationary states by the dashed trajectories of Fig.~\ref{fig:qubit_markovian_vs_non}. 

If the observed final state $\rho(t)$ lies outside the dashed region, we can infer that we are witnessing non-Markovianity. Note that this includes cases where classical information alone, i.e., measurement of $p(t)$, would be inconclusive by itself (in Fig.~\ref{fig:qubit_markovian_vs_non}, this is the case when $p(t) \leq \pi$). It also includes dynamics that, despite satisfying $|c(t)| < |c(0)|$, are still incompatible with a Markovian process as they preserve too much coherence.

\begin{figure}
	\includegraphics[width=\columnwidth]{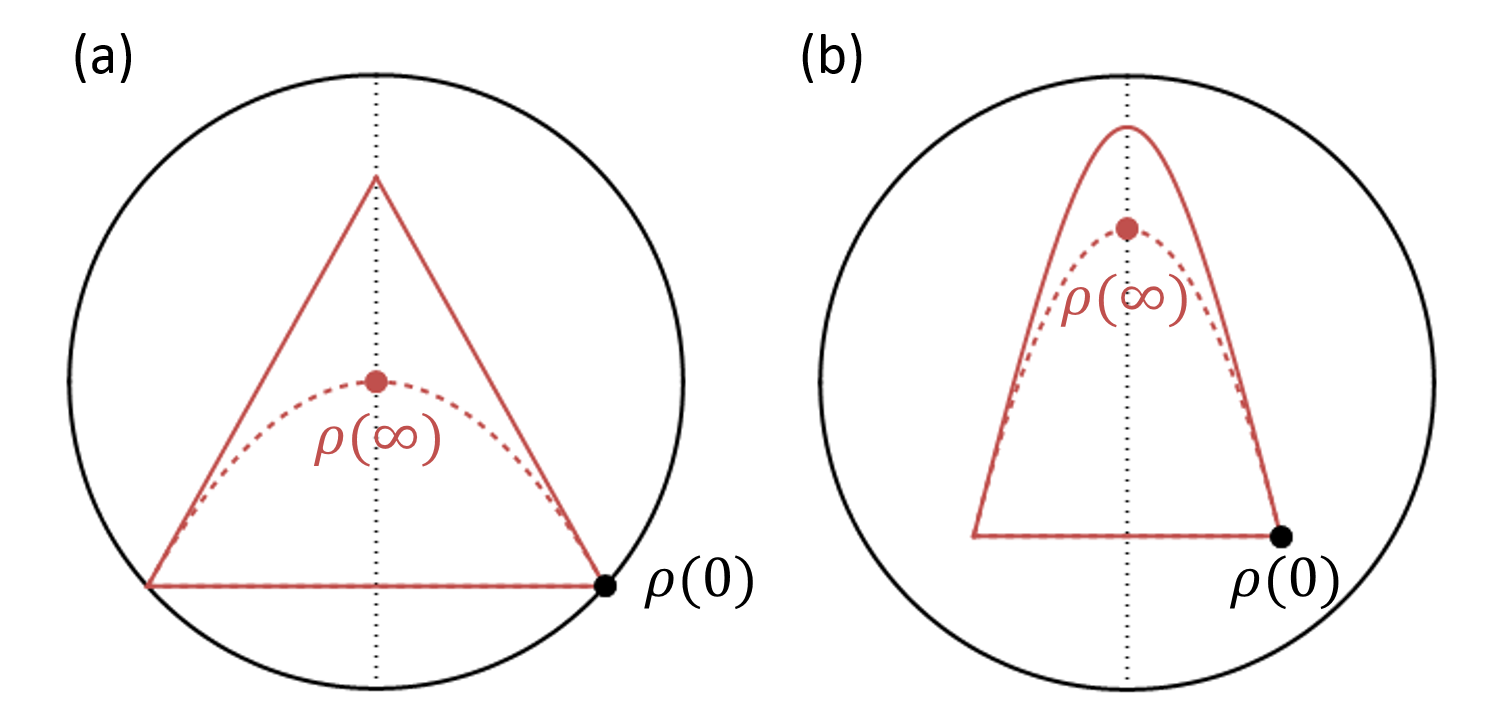}
	\caption{\label{fig:qubit_markovian_vs_non} \emph{Qubit covariant dynamics: Markovian vs. non-Markovian.} (a) Initial state with $p(0)=1/6$ and \mbox{$c(0)=\sqrt{5}/6$}, and $\rho(\infty)$ such that $\v{\pi}=(1/2,1/2)$. (b) Initial state with $p(0)=1/4$ and $c(0)=1/4$, and $\rho(\infty)$ such that \mbox{$\v{\pi}=(3/4,1/4)$}. The dashed lines show the maximum coherence preservation possible with Markovian covariant dynamics with a given fixed point $\rho(\infty)$; the solid lines show the maximum coherence preservation possible for general covariant operations with a fixed state given by $\rho(\infty)$.}
\end{figure}

\subsubsection{Population-based witnessing}

Even though complete tomographic knowledge about the final state $\rho(t)$ gives more powerful ways to identify non-Markovianity, owing to the particular structure of covariant maps, sometimes knowledge of the population dynamics is sufficient. As described in Eq.~\eqref{eq:motionforpopulation}, a covariant Markovian channel induces a stochastic matrix $P$ on the vector of populations, so that $\v{p}(t) = P \v{p}(0)$, where $P = e^{L t}$. Technically, one can say that the stochastic matrix $P$ must be \emph{embeddable}, which means that it is induced by exponentiation of a generator $L$. As not all stochastic matrices $P$ can be generated this way, embeddability of the population dynamics matrix gives a necessary condition for the channel to be Markovian. In particular, one can use the following known result~\cite{runnenburg1962elfving,minc1988nonnegative,davies2010embeddable}:

\begin{thm}\label{thm:embeddable}
	The eigenvalues $\{\lambda\}$ of a $d\times d$ embeddable stochastic matrix $P$ must satisfy $\lambda=r e^{i\phi}$ with \mbox{$-\pi\leq\phi\leq\pi$}, \mbox{$0\leq r\leq r(\phi)$} and \mbox{$r(\phi)=e^{-|\phi|\tan(\pi/d)}$}. In other words the eigenvalues are bounded to the region of the complex plane that lies inside the curve $x(\phi)+iy(\phi)$ with
		\begin{equation}
		\label{eq:embeddable_region}
		x(\phi)=e^{-|\phi|\tan\frac{\pi}{d}} \cos\phi,\quad
		y(\phi)=e^{-|\phi|\tan\frac{\pi}{d}} \sin\phi.		
		\end{equation}
\end{thm}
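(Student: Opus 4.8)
The plan is to translate the statement about the eigenvalues of $P$ into one about the eigenvalues of its generator, and then to exponentiate. By definition, embeddability of $P$ means $P=e^{L}$ for some valid generator $L$, i.e.\ a real matrix with $L_{x'|x}\ge 0$ for $x'\ne x$ and $\sum_{x'}L_{x'|x}=0$ --- exactly the class appearing in Eq.~\eqref{eq:motionforpopulation}. The eigenvalues of $P$ are then precisely $\{e^{\mu}\}$ as $\mu$ runs over $\mathrm{spec}(L)$, so it suffices to locate $\mathrm{spec}(L)$ in the complex plane.

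The heart of the argument is the claim that $\mathrm{spec}(L)$ is contained in the closed cone $\mathcal{C}_d$ about the negative real axis with half-opening angle $\pi/2-\pi/d$; equivalently, every eigenvalue $\mu=\alpha+i\beta$ of $L$ satisfies $|\beta|\le -\alpha\cot(\pi/d)$ (so in particular $\Re\mu\le 0$). The weak bound $\Re\mu\le 0$ is immediate from the Gershgorin circle theorem applied columnwise: each eigenvalue lies in some disk $|z-L_{x|x}|\le\sum_{x'\ne x}L_{x'|x}=-L_{x|x}$, and the union of these disks is the closed left half-plane. Sharpening this to the angular bound is the technical core, and the cleanest route is a discrete approximation. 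For $\varepsilon>0$ small enough that $\varepsilon\max_x|L_{x|x}|\le 1$, the matrix $I+\varepsilon L$ is a genuine (column-)stochastic matrix with eigenvalues $1+\varepsilon\mu$, so Karpelevich's theorem (refining Dmitriev--Dynkin) forces each $1+\varepsilon\mu$ into the region $\Theta_d$ of all eigenvalues of $d\times d$ stochastic matrices; letting $\varepsilon\to 0^{+}$ puts $\mu$ in the cone of feasible directions of $\Theta_d$ at its boundary point $z=1$, and a computation of the two boundary arcs of $\Theta_d$ incident to $1$ shows this cone equals $\mathcal{C}_d$. Tightness and the identification of the extremal direction come from the cyclic generator $L=\lambda(C_d-I)$ with $C_d$ the cyclic permutation matrix: its eigenvalues $\lambda(e^{2\pi i k/d}-1)$ satisfy $e^{2\pi i/d}-1=2\sin(\pi/d)\,e^{i(\pi/2+\pi/d)}$, which lies exactly on $\partial\mathcal{C}_d$. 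A self-contained alternative rewrites the eigenvalue equation (for a right eigenvector $w$ of $L^{T}$) as $\mu w_x=\sum_{x'\ne x}L_{x'|x}(w_{x'}-w_x)$, exhibiting $\mu$ in terms of the differences $w_{x'}-w_x$, and then tracks how the phases of a component of maximal modulus must propagate around a cycle of length at most $d$; this is the combinatorial content of the cited references.

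Granting the lemma, the conclusion is routine. An eigenvalue $\mu=\alpha+i\beta\in\mathcal{C}_d$ of $L$ gives the eigenvalue $\lambda=e^{\alpha}e^{i\beta}$ of $P$, so $r:=|\lambda|=e^{\alpha}$ and the principal argument $\phi\in(-\pi,\pi]$ of $\lambda$ obeys $|\phi|\le\min(|\beta|,\pi)\le|\beta|$. Since $\alpha\le -|\beta|\tan(\pi/d)$ and $t\mapsto e^{-t\tan(\pi/d)}$ is decreasing, $r=e^{\alpha}\le e^{-|\beta|\tan(\pi/d)}\le e^{-|\phi|\tan(\pi/d)}=r(\phi)$, which is exactly the asserted bound; the curve $(x(\phi),y(\phi))$ of Eq.~\eqref{eq:embeddable_region} is just the polar parametrisation $r(\phi)e^{i\phi}$ of this boundary, closing up at $\phi=\pm\pi$ at the point $-e^{-\pi\tan(\pi/d)}$ (as it must, $L$ being real makes $\mathrm{spec}(P)$ symmetric under conjugation). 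The degenerate case $d=2$, where $\tan(\pi/d)=\infty$ and the region collapses to $[0,1]$, is handled separately: a $2\times 2$ generator has real eigenvalues $0$ and $L_{0|0}+L_{1|1}\le 0$, so $P$ has eigenvalues $1$ and $e^{L_{0|0}+L_{1|1}}\in[0,1]$.

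The main obstacle is the key lemma --- confining $\mathrm{spec}(L)$ to the cone of half-angle $\pi/2-\pi/d$. Everything else (the identification $P=e^{L}$, the exponentiation, the $d=2$ check) is bookkeeping; the substantive work is either extracting the local geometry of the Karpelevich region $\Theta_d$ at $z=1$ or pushing through the phase-propagation-around-a-cycle estimate directly on $L$, which is exactly what is done in Refs.~\cite{runnenburg1962elfving,minc1988nonnegative,davies2010embeddable}.
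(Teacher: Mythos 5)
Your proposal is correct and follows essentially the same route as the paper: locate the spectrum of the generator in the cone of half-angle $\pi/2-\pi/d$ about the negative real axis, using the fact that eigenvalues of $d\times d$ stochastic matrices are excluded from the circular segments of the unit disc cut off by the chords from $1$ to $e^{\pm 2\pi i/d}$, and then exponentiate. The only real difference is cosmetic: the paper invokes the exact characterisation $L=\alpha(C-\iden)$ with $C$ stochastic (Lemma~\ref{thm:generator_form}), so that eigenvalues of $L$ are exactly $\alpha(\nu-1)$ and the angular bound $|\arg\mu|\ge\pi/2+\pi/d$ follows directly from Lemma~\ref{thm:part_karpelevich} without any limit, whereas you reach the same cone via the equivalent observation that $\iden+\varepsilon L$ is stochastic and an $\varepsilon\to 0^+$ tangent-cone argument at $z=1$.
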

\noindent For the convenience of the reader we present the proof in Appendix~\ref{appendix_embeddable}.

Given a population transfer matrix $P$ acting on a \mbox{qu-$d$-it} system, one can use Theorem~\ref{thm:embeddable} to verify whether any of its eigenvalues lie outside of the ``embeddability region''.\footnote{If only $\v{p}(0)$ and $\v{p}(t)$ are known, this requires the study of all stochastic $P$ such that $P\v{p}(0) = \v{p}(t)$.} In order to understand how restrictive this condition is we compare the embeddable region, specified by Theorem~\ref{thm:embeddable}, with the region occupied by the eigenvalues of generic $d\times d$ stochastic matrices, specified by the Karpelevi\u{c} theorem~\cite{karpelevich1951characteristic,ito1997new}. We present this comparison in Fig.~\ref{fig:stoch_eig} for several small values of $d$. Whereas for small dimensions a large class of covariant operations can be deemed non-Markovian  by simply analysing the dynamics of populations, the bound becomes less tight for higher dimensional systems, illustrating the relevance of the previous considerations involving coherence. Interestingly, some important transformations are necessarily non-Markovian in any dimension. As an example consider ``probabilistic rigid translations'' defined by a stochastic map \mbox{$T(q)=(1-q)\iden+qP$} with $q\in(0,1]$ and $P$ a cyclic permutation, i.e., $P_{i+1|i} = 1$ for $i=0,...,d-2$ and  $P_{0|d-1} =1$ (or $P_{i-1|i} = 1$ for $i=1,...,d-1$ and $P_{d-1|0} =1$). Since $P^d=\iden$, one of the eigenvalues of $T(q)$ is \mbox{$1-q+qe^{2\pi i/d}$}, which lies on the edge connecting the points $(1,0)$ and \mbox{$(\cos \frac{2\pi}{d},\sin \frac{2\pi}{d})$}. However, as can be verified using Eq.~\eqref{eq:embeddable_region}, for $d\geq 3$, the eigenvalues of embeddable stochastic matrices will always lie below this edge, and hence ``probabilistic rigid translations'' cannot be achieved using Markovian dynamics.

\begin{figure}[t!]\centering
	\includegraphics[width=0.8\columnwidth]{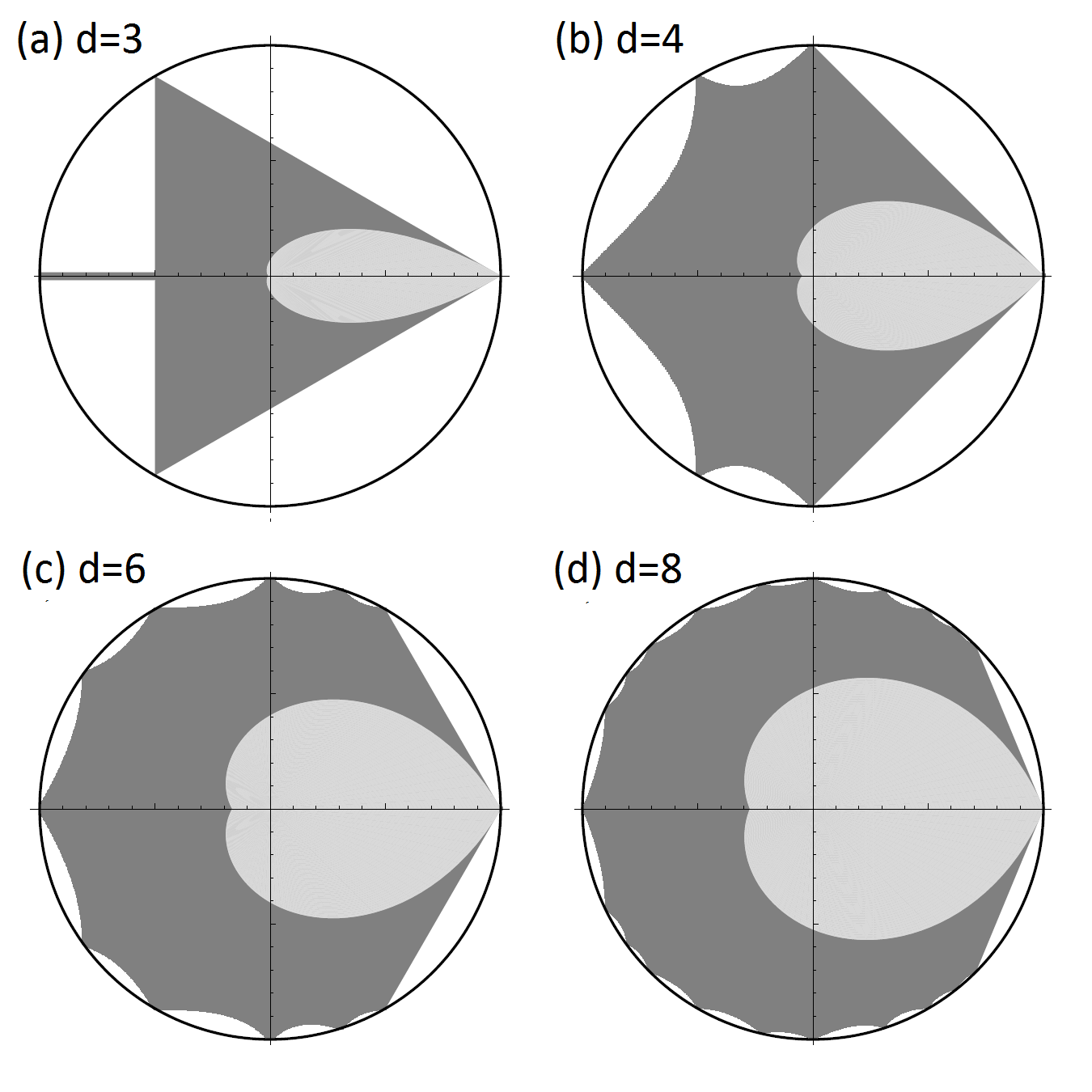}
	\caption{\label{fig:stoch_eig} \emph{Eigenvalues of stochastic matrices.} The eigenvalues of a $d\times d$ stochastic matrix all lie within the unit circle on the complex plane, independently of $d$. For a given $d$, points corresponding to the eigenvalues of a stochastic $d\times d$ matrix are given by the Karpelevi\u{c} theorem~\cite{karpelevich1951characteristic,ito1997new} and are depicted in dark grey. Points that correspond to the eigenvalues of an embeddable stochastic $d\times d$ matrix, specified by Theorem~\ref{thm:embeddable}, are depicted in light grey.}
\end{figure}

\subsection{Resource theory of thermodynamics}

Despite a great amount of work pursued within the so-called resource-theoretic formulation of quantum thermodynamics (see Ref.~\cite{goold2016role, vinjanampathy2016quantum} and references therein), its relation to the standard formalism of master equations and thermalisation models has not been clarified. This has generated confusion regarding the scope of the results derived within the resource theory and their relevance for applications~\cite{halpern2017toward}. In this section we present a unified picture that relates both formalisms, and show explicitly how the technical machinery of open quantum systems can be used to strengthen the resource-theoretic approach in physically relevant scenarios.  
	
\subsubsection{From generalised thermal operations to standard thermalisation models}
	
Within the resource theory of thermodynamics one studies the possible dynamics of quantum systems induced by the restricted set $\{\E\}$ of quantum channels known as \emph{thermal operations}~\cite{janzing2000thermodynamic}. However, all constraints on the allowed transformations derived so far emerge from two core properties: covariance of $\E$ with respect to time-translations, as given in Eq.~\eqref{eq:covariance}, and the presence of a thermal fixed point, i.e., $\mathcal{E}(\tau) = \tau$ with $\tau = e^{-\beta H}/ \tr{}{e^{-\beta H}}$ being the thermal Gibbs state at inverse temperature $\beta$~\cite{lostaglio2015quantum}. The set of channels satisfying these two properties has been called \emph{generalised thermal operations} (GTOs) in Ref.~\cite{cwiklinski2015limitations}. 

\begin{figure}[t!]\centering
	\includegraphics[width=\columnwidth]{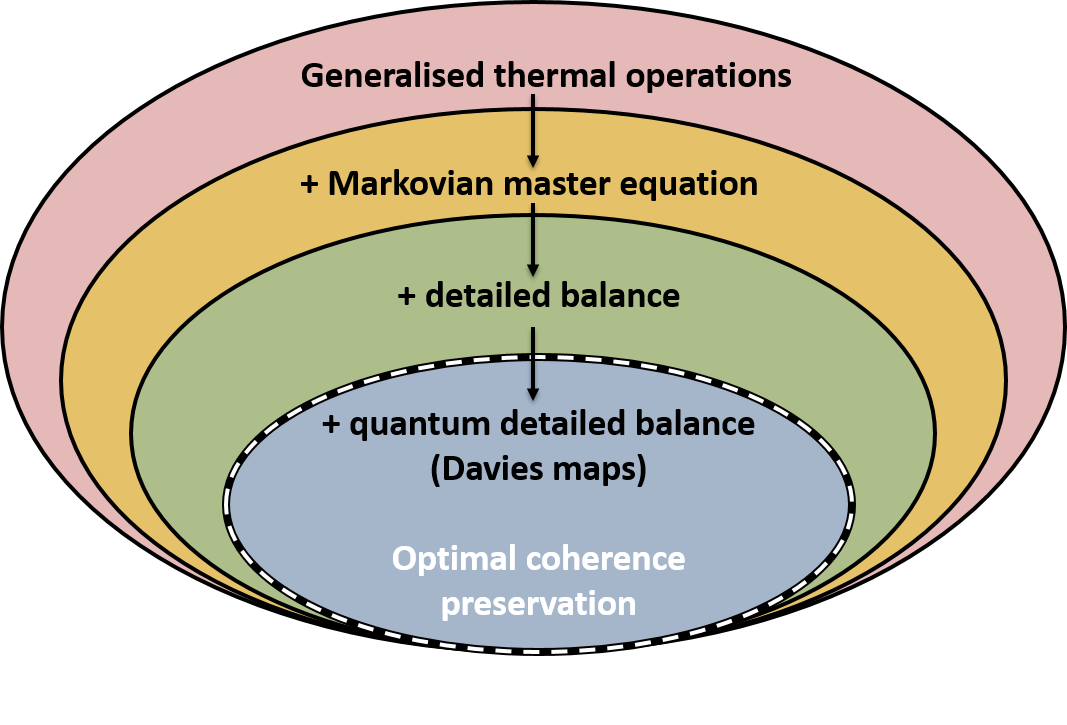}
	\caption{\label{fig:structure} \emph{Families of thermodynamic quantum channels.} By incorporating assumptions of Markovianity and quantum detailed balance into the set of generalised thermal operations, one obtains the set of Davies maps. These are precisely those which achieve optimal coherence transformation in Theorem~\ref{thm:markovian_bound}.}
\end{figure}

We will now argue that GTOs, in a precise sense, are a generalisation of a class of master equations whose properties are commonly assumed or derived in the study of thermalisation processes. Fig.~\ref{fig:structure} illustrates the overall structure of the sets of quantum channels we consider. More precisely, it shows how one can move from the resource theory of thermodynamics to a standard open quantum system description by adding certain physical restrictions. The largest class of operations represents GTOs. The dynamics of coherence under these channels is limited by the constraints of Theorem~\ref{thm:non_markovian_bound}, with transition probabilities $P_{x'|x}$ satisfying $P \v{\pi} = \v{\pi}$, where $\v{\pi}$ is a vector of thermal occupations (eigenvalues of $\tau$), i.e., $\pi_x\propto e^{-\beta \hbar\omega_x}$. GTOs can then naturally be restricted to the subset that admits a realisation through a Markovian master equation, i.e., satisfying Assumption~\ref{ass:markov}. This set presents stronger constraints on processing coherence, in the form of Theorem~\ref{thm:markovian_bound}, where $P \v{\pi} = \v{\pi}$ translates into the condition $L \v{\pi} = \v{0}$ on the transition rates $L_{x'|x}$. The next step is to recognise that the relation \mbox{$L\v{\pi} = \v{0}$} is itself simply a weaker form of the so-called detailed balance condition, 
\begin{equation}
\label{eq:detailedbalance}
L_{x'|x} \pi_x = L_{x|x'} \pi_{x'}.
\end{equation}
This stronger condition is satisfied, for example, by Davies maps, which describe standard thermalisation models whose microscopic derivation involves large thermal baths and weak couplings~\cite{davies1974markovian, roga2010davies}. In our formalism, detailed balance can be implemented simply by restricting the allowed transition rates in Theorem~\ref{thm:markovian_bound}. We also note that Davies maps are covariant Markovian channels satisfying an even stronger form of Eq.~\eqref{eq:detailedbalance} known as \emph{quantum detailed balance}~\cite{roga2010davies}. To complete the connection between the resource-theoretic and master equation formalisms we make the following observation: among all detailed balanced GTOs that admit a master equation description, those that transform coherence optimally are Davies maps. We will formally prove this by showing that optimal coherence transformations require quantum detailed balance and hence restrict us to the smallest set shown in Fig.~\ref{fig:structure}.  

To sum up, GTOs can be understood as a generalisation of Davies maps in which the following conditions are relaxed:
\begin{enumerate}
	\item The map does not necessarily admit a Markovian master equation description, i.e., Assumption~\ref{ass:markov} is dropped.
	\item Quantum detailed balance is relaxed to the minimal condition that the thermal state is a fixed point of the dynamics.
\end{enumerate}
In the remainder of this section we first prove the already mentioned connection between quantum detailed balance and optimality of coherence preservation. We then illustrate how additional physical restrictions on the resource theory, identified in Fig.~\ref{fig:structure}, allow us to obtain stronger constraints on the allowed transformations.

\subsubsection{Quantum detailed balance and optimal coherence processing}
	
The dynamics generated by the dissipator $\L$ satisfy the quantum detailed balance condition when~\cite{roga2010davies}
\begin{equation}
\label{eq:qdb}
	\tr{}{\mathcal{L}\left(A \tau\right)B} =\tr{}{\mathcal{L}\left(\tau B\right)A}, 
\end{equation}
for all $d \times d$ matrices $A$, $B$, with $\tau$ denoting a thermal Gibbs state. We will now prove that, under the assumptions of Theorem~\ref{thm:markovian_bound}, with the Markovian phase-matching condition holding and population transition rates satisfying the detailed balance condition [Eq.~\eqref{eq:detailedbalance}], the maps that transform coherence optimally satisfy Eq.~\eqref{eq:qdb}, and hence are Davies maps. 

Due to linearity, quantum detailed balance holds if and only if it holds for all $A$ and $B$ of the form $A=\ketbra{x}{y}$ and $B=\ketbra{y'}{x'}$, i.e.,
\footnotesize
\begin{equation}
\label{eq:quant_dbc}
e^{-\beta\hbar\omega_y}\matrixel{x'}{\L\left(\ketbra{x}{y}\right)}{y'} =e^{-\beta\hbar\omega_{y'}}\matrixel{x}{\L\left(\ketbra{x'}{y'}\right)}{y}^*. 
\end{equation}\normalsize
Note that, due to covariance, we only need to consider $\ketbra{x}{y}$ and $\ketbra{x'}{y'}$ belonging to the same mode, since all other terms vanish. 

For mode zero ($x=y$ and $x'=y'$), Eq.~\eqref{eq:quant_dbc} simply yields Eq.~\eqref{eq:detailedbalance} and thus holds by assumption. For non-zero modes we need to use the expression for the optimal channel. This is given by (see Appendix~\ref{appendix:proof} for details):
\begin{subequations}
\begin{eqnarray}
\L(\cdot) &=& \A(\cdot) - \tfrac{1}{2}\{\A^\dag(\iden), \cdot\}, \\
A^{x'|x}_{y'|y} &=& \vartheta_{x'x}(0) \vartheta^*_{y'y}(0)  \sqrt{L_{x'|x}L_{y'|y}}. 
\end{eqnarray}
\end{subequations}
If either $x'\neq x$ or $y'\neq y$ the above can be simplified as [see Eq.~\eqref{eq:anticomm_els}]
\begin{equation}
\matrixel{x'}{\L(\ketbra{x}{y})}{y'}=\matrixel{x'}{\A(\ketbra{x}{y})}{y'}.
\end{equation}
Using this equation and the expression for the optimal $\A$, it is now straightforward to show that Eq.~\eqref{eq:quant_dbc} holds. To complete the proof we need to show that Eq.~\eqref{eq:quant_dbc} also holds when $x=x'$ and $y=y'$. This is equivalent to \mbox{$\matrixel{x}{\L(\ketbra{x}{y})}{y}$} being real, which can be easily verified.

\subsubsection{Strengthening the resource theory constraints}

Finally, we demonstrate how our framework can be used to derive new and stronger bounds on the processing of coherence under thermal operations when additional physical constraints hold. We first consider the simplest case of a qubit system. Then the hierarchy of Fig.~\ref{fig:structure} simplifies to three sets: general GTOs, GTOs admitting a Markovian master equation, and Davies maps. The fixed thermal state is specified by $\v{\pi} = (\pi, 1-\pi)$ with $\pi = 1/(1+e^{-\beta \hbar \omega})$, where $\omega$ is the relevant transition frequency. Recall that we denote the off-diagonal element of the density operator in the energy eigenbasis by $c(t)$ and the ground state population by $p(t)$. In Ref.~\cite{cwiklinski2015limitations,lostaglio2015quantum} it was shown that for transformations induced by GTOs the following tight bound holds:
\begin{equation}
\label{eq:nmto}
|c(t)| \leq \frac{\sqrt{(p(t)-q_\beta(0))(p(0)-q_\beta(t))}}{|p(0) - q_\beta(0)|}|c(0)|,
\end{equation} 
where \mbox{$q_\beta(t):=(1-p(t))e^{\beta \hbar \omega}$}. The bound is marked with solid lines in Fig.~\ref{fig:qubit_markovian_vs_non}. This can be obtained directly from Theorem~\ref{thm:non_markovian_bound} by imposing $P \v{\pi} = \v{\pi}$~\cite{lostaglio2015quantum}, and was shown to be achievable under thermal operations in Ref.~\cite{cwiklinski2015limitations}.

Employing the relation given in Eq.~\eqref{eq:qubitmarkovianbound}, one obtains a tighter bound for GTOs that admit a Markovian master equation, namely:
\begin{equation}
|c(t)| \leq \sqrt{\frac{p(t) - q_\beta(t)}{p(0) - q_\beta(0)}}|c(0)|.
\end{equation} 
This bound is plotted with dashed lines in Fig.~\ref{fig:qubit_markovian_vs_non}. From the discussion presented in this section we know that processes achieving this bound are Davies maps. Moreover, we see that for any $\beta \neq \infty$ and any initial state carrying quantum coherence, the saturation of the thermal operation (or GTO) bound always requires non-Markovianity.\footnote{Also note that at zero temperature (a situation studied in Ref.~\cite{narasimhachar2015low}) the two regions coincide, i.e., any transformation that can be achieved by the full set of thermal operations can be also achieved by Markovian processes.}

For higher dimensional systems the relation $L \v{\pi} = \v{0}$, which captures the irreversibility of thermal operations, allows one to find temperature-dependent bounds on the coherence transport rates $t^{x'|x}_{y'|y}$. More precisely, taking $\pi_x\propto e^{-\beta \hbar\omega_x}$ one can show that (see Appendix~\ref{appendix:lbound} for details)
\begin{equation}
\label{eq:transport_bound}
t^{x'|x}_{y'|y}\leq \gamma_{x'y'} \min \left\{e^{-\beta \hbar \omega_{x' x}}, 1\right\},	
\end{equation}
so that transport processes responsible for	moving coherence up in energy are exponentially damped by a Gibbs factor $e^{-\beta \hbar \omega_{x' x}}$. This becomes more pronounced when one additionally assumes the detailed balance condition for the population dynamics, i.e., \mbox{$L_{x'|x}=L_{x|x'}e^{-\beta\hbar\omega_{x'x}}$}, resulting in asymmetry between transport rates:
\begin{equation}
t^{x'|x}_{y'|y}\leq t^{x|x'}_{y|y'} \min \left\{e^{-\beta \hbar \omega_{x' x}}, 1\right\}.
\end{equation}
These relations are the analogue at the level of master equations of the results derived for GTOs in Ref.~\cite{lostaglio2015quantum}.

\section{Conclusions}
\label{sec:conclusions}

In this work we have attempted to unify two rather contrasting concepts. On the one hand, Holevo introduced an approach to characterise the generators of dynamics compatible with a given symmetry~\cite{holevo1993note, holevo1995structure}. On the other hand, recent theoretical works from quantum information and the theory of reference frames present the \emph{lack} of symmetry of a quantum state as a consumable resource when dynamics are restricted by a symmetry principle~\cite{bartlett2007reference,marvian2014extending}. In the specific case of symmetry under time translation considered in this paper, this resource coincides with quantum coherence in the basis of the system Hamiltonian~\cite{lostaglio2015quantum, marvian2016quantify}. 

We have investigated what are the fundamental limitations on the processing of such coherences. Our results are derived using purely the underlying symmetry and the assumption of Markovianity, without any reference to the specific features of a particular model. This yields general bounds connecting the evolution of populations and coherences, from which a wide variety of further results are easily obtained. We find a $d$-dimensional generalisation of the classic inequality $T_2 \leq 2 T_1$ that relates the relaxation time $T_1$ and the decoherence time $T_2$ of a qubit. Highlighting the relationship between ergodicity and the preservation of coherence enables us to study the role of non-Markovianity as a resource for coherence processing. It also raises the possibility of engineering a symmetric dissipative interaction to have multiple fixed points, with the aim of protecting coherent resources (in a similar spirit to Ref.~\cite{lin2013dissipative}). By providing explicit examples we show how non-Markovian transformations can enhance coherence processing under symmetric dynamics, motivating the utility of a resource-theoretic formulation of non-Markovian processes~\cite{rivas2014quantum}. We also present methods for witnessing such non-Markovian behaviour through the dynamics of coherences and populations. These are based on the underlying symmetry of the dynamics and, as such, illustrate how symmetry reasoning can simplify the detection of non-Markovianity. We point out that the possibility of simultaneously saturating our bound for all coherence elements, or of weakening the Markovian phase-matching condition, remains open for future investigations.

We have also explicitly connected the resource theory approach to thermodynamics with the well-established master equation formalism. As well as clarifying the physical scope of the abstract resource theory, this paves the way for the use of the open quantum systems dynamics toolkit to tackle resource-theoretic questions, as our bounds illustrate. Conversely, new insights can be obtained into well-established models of thermalisation, as demonstrated by the optimal coherence preservation properties of Davies maps. More generally, we have presented evidence that our approach allows one to establish fruitful links between two formalisms used to study thermodynamics that were principally developed independently.

In this study we have focused on time-translation symmetry, but all the main ingredients can in fact be generalised to dynamics that are symmetric with respect to an arbitrary group $G$. The result of Holevo~\cite{holevo1993note}, the so-called resource theory of asymmetry~\cite{marvianthesis} and the harmonic analysis tools~\cite{marvian2014modes} used throughout this paper all apply to general groups. Hence one should be able to derive relations for the evolution of a generalised ``coherence'' for different observables. Resource states in the generalised theory are those which are \emph{asymmetric} with respect to a group $G$, i.e., they evolve non-trivially under its action~\cite{marvian2013asymmetry, marvian2014extending}. In our study of time-translation symmetry we have taken $G=U(1)$, and such states possess quantum coherence relative to the basis defined by the Hamiltonian. If we were to instead take $G=SU(2)$, i.e., spherically symmetric dynamics, then a state that is sensitive to rotations (a superposition of different angular momentum eigenstates) would constitute a resource. We thus hope that the results we have presented here suggest a general resource-theoretic approach for studying the consequences of symmetry within the master equation formalism.

\subsection*{Acknowledgments}
    We thank A.~Ac\'in, B.~Bylicka, G.~Canevari, K.~Hovhannisyan, F.~Wirth, K.~\.Zyczkowski for useful discussions. ML acknowledges financial support  
    from the Spanish MINECO (Severo Ochoa
    SEV-2015-0522 and project QIBEQI FIS2016-80773-P), Fundacio Cellex, and Generalitat de Catalunya (CERCA
    Programme and SGR 875). KK acknowledges support from the ARC via the Centre of Excellence in Engineered Quantum Systems (EQuS), project number
    CE110001013. This work was supported by EPSRC and in part by
    COST Action MP1209.	All authors contributed equally to this work.
	    
\bibliography{Bibliography_thermodynamics}

\appendix

\section{Covariant maps -- a walk-through}
\label{appendix:covariant}

\subsection{Definition}

Consider a $d$-dimensional system with non-degenerate Hamiltonian $H = \sum_{x=0}^{d-1} \hbar\omega_x \ketbra{x}{x}$. A completely positive map $\E$ is called \textit{covariant} with respect to time translations generated by $H$ (also known as \textit{phase-insensitive}) when
\begin{equation}
	\mathcal{E}\left(e^{-iHt}\rho e^{iHt}\right) = e^{-iHt} \mathcal{E}(\rho)e^{iHt}.
\end{equation}
holds for all $\rho$ and $t$. Note that this is a special case of covariance with respect to a general (compact) group $G$ \cite{marvian2014extending}. Equivalent characterisations of covariant maps can be found in Ref.~\cite{marvian2016quantify}, Section~IIIA.

\subsection{Modes of coherence}
\label{app_sec:modesofcoherence}

Recall that, as in the main text, the set $\{\omega\}$ consisting of all differences between eigenfrequencies of $H$ is known as the \textit{Bohr spectrum} of $H$. A covariant map can be decomposed according to its action on the \emph{modes} of a state~\cite{marvian2014modes} (an analogue of the Fourier component, but for quantum states). The mode structure is defined by the Bohr spectrum of $H$. More precisely, the state \mbox{$\rho = \sum_{x,y} \rho_{xy} \ketbra{x}{y}$} can be written in the form
\begin{equation}
\label{eq:modes}
\rho = \sum_{\omega} \rho^{(\omega)},
\end{equation}
where
\begin{equation}
\label{eq:modesofcoherencerho}
\rho^{(\omega)} = \sum_{\mathclap{\substack{x,y \\ \omega_{xy} = \omega}}} \rho_{xy} \ketbra{x}{y}=: \sum^{(\omega)}_{\substack{x,y}} \rho_{xy} \ketbra{x}{y}.
\end{equation}
Here we have introduced the symbol $\sum^{(\omega)}$ to indicate the sum over indices $x,y$ such that $\omega_{xy}=\omega$ (recall that $\omega_{xy}=\omega_x-\omega_y$). The operators $\rho^{(\omega)}$ are called \textit{modes of coherence} of the state $\rho$. Now, if $\E$ is a covariant operation such that \mbox{$\E(\rho)=\sigma$} then
\begin{equation}
\label{eq:covariancemodes}
\E\left(\rho^{(\omega)}\right) = \sigma^{(\omega)} \quad \quad \forall \omega.
\end{equation}
The converse is also true (see Ref.~\cite{marvian2016quantify}, Proposition~6). In other words, each mode $\rho^{(\omega)}$ of the initial state is independently mapped to the corresponding mode $\sigma^{(\omega)}$ of the final state if and only if the mapping is achieved via a covariant operation. 

We can now conveniently parametrise a covariant map $\E$ in the following way. First, let us define the action of $\E$ on diagonal (i.e., mode zero) matrix elements by
\begin{equation}
\label{eq:action_diag}
\E(\ketbra{x}{x})=\sum_{x'=0}^{d-1} P_{x'|x} \ketbra{x'}{x'},
\end{equation}
where, if $\mathcal{E}$ is trace-preserving, the coefficients are the entries $P_{x'|x}$ of a stochastic matrix $P$ ($P_{x'|x}\geq 0$ and \mbox{$\sum_{x'}P_{x'|x}= 1$}). This can naturally be interpreted as the population transfer matrix, i.e., the transition matrix between energy eigenstates. To see this more clearly, let us introduce the \emph{vector of populations} $\v{p}$ that describes the zero mode of $\rho$, i.e., its components are given by $p_x=\rho_{xx}$. The transformation of the zero mode under $\E$ is then described by the transformation $\v{p}\mapsto P\v{p}$.

The action of $\E$ on an off-diagonal matrix element $\ketbra{x}{y}$ can be parametrised as follows:
\begin{equation}
\label{eq:action_offdiag}
\E(\ketbra{x}{y})=\sum^{\om{x}{y}}_{x',y'}  C^{x'|x}_{y'|y} \ketbra{x'}{y'}.
\end{equation}
The coefficients $C^{x'|x}_{y'|y}$ describe how much the initial coherence $\ketbra{x}{y}$ contributes to the final coherence $\ketbra{x'}{y'}$. Hermiticity of the final state imposes \mbox{$C^{x'|x}_{y'|y}=(C^{y'|y}_{x'|x})^*$}. Note that formally $P_{x'|x}$ can be thought of as $C^{x'|x}_{x'|x}$.

As an example we now look at a qubit system, which without loss of generality can be described by the Hamiltonian $H=\hbar \Omega \ketbra{1}{1}$.
\begin{ex}
	\label{ex:qubit1}
	The state $\rho$ can be decomposed into three modes consisting of the following matrix elements:
	\begin{eqnarray*}
		\rho^{(0)}&:&\{\ketbra{0}{0},\ketbra{1}{1}\},\\
		\rho^{(\Omega)}&:&\{\ketbra{1}{0}\},\\
		\rho^{(-\Omega)}&:&\{\ketbra{0}{1}\}.
	\end{eqnarray*}
	As a covariant map does not mix modes, the action of $\E$ on $\rho$ is given by
	\begin{eqnarray*}
		\E(\ketbra{0}{0})&=&P_{0|0}\ketbra{0}{0}+P_{1|0}\ketbra{1}{1},\\
		\E(\ketbra{1}{1})&=&P_{1|1}\ketbra{1}{1}+P_{0|1}\ketbra{0}{0},\\
		\E(\ketbra{1}{0})&=&C^{1|1}_{0|0}\ketbra{1}{0},\\
		\E(\ketbra{0}{1})&=&C^{0|0}_{1|1}\ketbra{0}{1}.
	\end{eqnarray*}
	Since \mbox{$P_{1|0}=1-P_{0|0}$}, \mbox{$P_{0|1}=1-P_{1|1}$} and \mbox{$C^{1|1}_{0|0}=(C^{0|0}_{1|1})^*$}, a general covariant qubit map is fully specified by two transition probabilities, $P_{0|0}$ and $P_{1|1}$, and a complex number $C^{0|0}_{1|1}$.
\end{ex}

\subsection{Structure of the Choi-Jamio\l kowski state}
\label{app_sec:Choi_structure}

Until now we have described the conditions for $\E$ to be covariant. However, in order to represent a physical transformation $\E$ must also be completely positive (CP) and, since we look at deterministic transformations, we take $\E$ to be trace-preserving (TP). The latter property is automatically satisfied given the mode structure and the fact that $P$ is a stochastic matrix. To see this, note that the trace of the final state can be written as
\small
\begin{equation*}
\tr{}{\E(\rho)} = \tr{}{\E(\rho)^{(0)}} = \tr{}{\E(\rho^{(0)})} = \sum_x(P\v{p})_x=1,
\end{equation*}
\normalsize
where we have used Eq.~\eqref{eq:covariancemodes} and the fact that a stochastic matrix maps the space of probability vectors onto itself.

To enforce the CP condition, we recall the Choi-Jamio\l{}kowski isomorphism~\cite{jamiolkowski1972linear,choi1975completely,bengtsson2006geometry} that maps a quantum channel $\E$ into the state 
\begin{equation}
J[\E]:=[\E \otimes \mathcal{I}]\left(\ketbra{\phi^+}{\phi^+}\right),
\end{equation}
where \mbox{$\ket{\phi^+} =\sum_{x=0}^{d-1} \ket{xx}/\sqrt{d}$} is the maximally entangled state, and $\mathcal{I}$ denotes the identity superoperator. The requirement of $\E$ to be CP is equivalent to the Choi operator $J[\E]$ being positive semidefinite. Writing out $J[\E]$ explicitly we have
\begin{eqnarray}
	\label{eq:choi}
	J[\E]&=& \frac{1}{d}\sum_{x,y} \sum^{\om{x}{y}}_{x',y'} C^{x'|x}_{y'|y}\ketbra{x'}{y'} \otimes \ketbra{x}{y}\nonumber\\
	&=&\frac{1}{d} \sum_{x',x} \sum^{\omp{x}{x}}_{y',y} C^{x'|x}_{y'|y}\ketbra{x'x}{y'y},
\end{eqnarray}
where we have rearranged the expression to emphasise the block-diagonal structure. Note that $J[\mathcal{E}]$ is block-diagonal with respect to the eigenbasis of $ H \otimes \iden -\iden \otimes H^*$.  

Each block consists of matrix elements $C^{x'|x}_{y'|y}$ for which \mbox{$\omega_{x'x} = \omega_{y'y}=\omega$} and can thus be labelled by $\omega$ (see Fig.~\ref{fig:choi_state}). The diagonal of each block $\omega$ consists of population transfer coefficients $P_{x'|x}$ with $\omega_{x'x}=\omega$, i.e., it describes the population transfers between energy levels differing by $\hbar\omega$ in energy. Off-diagonal elements in the $\omega=0$ block, $C^{x|x}_{y|y}$, correspond to the fraction of the initial coherence term $\rho_{xy}$ that is preserved in the final state (modulo adding phases); off-diagonal elements in blocks with $\omega\neq 0$, $C^{x'|x}_{y'|y}$, describe the transfer of the initial coherence term $\rho_{xy}$ into the final coherence term $\sigma_{x'y'}$.

\begin{figure}
	\includegraphics[width=\columnwidth]{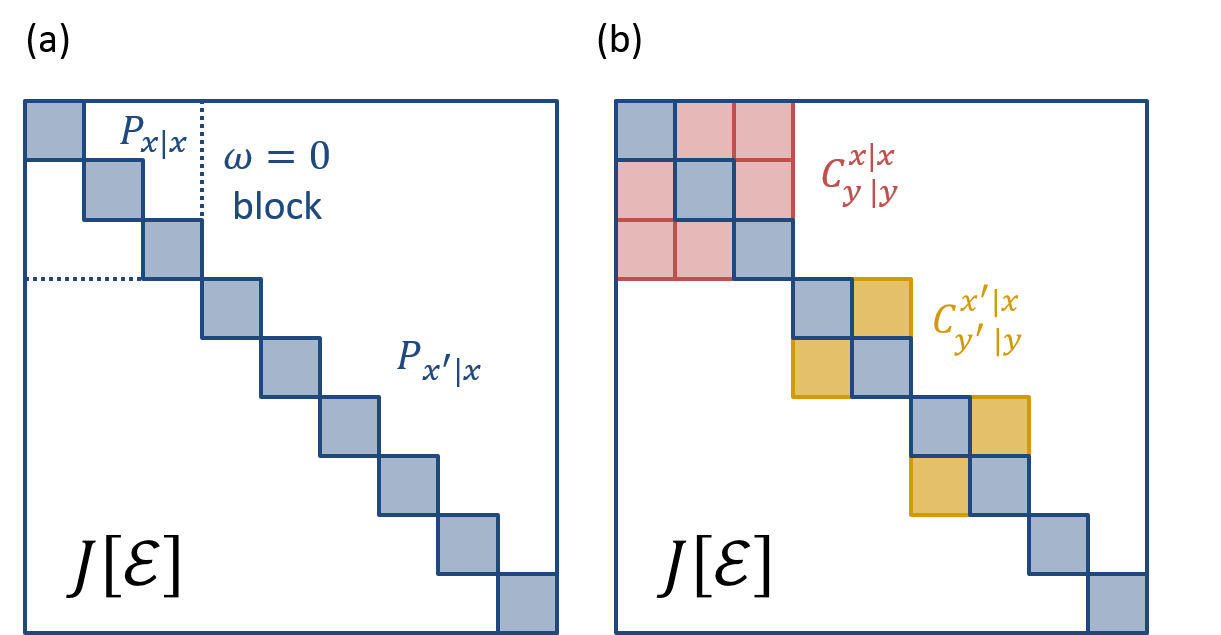}
	\caption{\label{fig:choi_state} \emph{The block-diagonal structure of the Choi state.} An example of the Choi state $J[\E]$ of a covariant map $\E$ for a qutrit system described by an equidistant Hamiltonian, i.e., $H=\hbar\Omega(\ketbra{1}{1}+2\ketbra{2}{2})$. (a)~The diagonal terms of $J[\E]$ (in blue) describe the evolution of populations, i.e., the transition rates $P_{x'|x}$ between energy eigenstates $\ketbra{x}{x}$ and $\ketbra{x'}{x'}$. (b)~The off-diagonal terms of $J[\E]$ describe the preserved ``fraction'' $C^{x|x}_{y|y}$ of coherence term $\ketbra{x}{y}$ (in red), and the amount $C^{x'|x}_{y'|y}$ of coherence transferred (in yellow) between coherence terms $\ketbra{x}{y}$ and $\ketbra{x'}{y'}$.}
\end{figure}

\subsection{Positivity of the Choi-Jamio\l kowski state}
\label{app_sec:positivitychoi}

Owing to the block-diagonal structure of $J[\E]$, positivity is equivalent to the positivity of each block. A necessary condition for the positivity of block $\omega$ is that for all $x,y$ and $x',y'$ within, one has
\begin{equation}
\label{eq:opt_coh}
\left|C^{x'|x}_{y'|y}\right|\leq \sqrt{P_{x'|x}P_{y'|y}},
\end{equation}
i.e., the magnitude of the off-diagonal term is constrained by the corresponding diagonal terms.  Now note that from Eq.~\eqref{eq:action_offdiag} and the triangle inequality, we have
\begin{equation}
|\sigma_{x'y'}| \leq \sum^{(\omega_{x'y'})}_{x,y}\left|C^{x'|x}_{y'|y}\right| |\rho_{xy}|.
\end{equation}
Using the above together with Eq.~\eqref{eq:opt_coh} immediately yields Eq.~\eqref{eq:non_markovian_bound} from Theorem~\ref{thm:non_markovian_bound} in the main text, i.e.,
\begin{equation}
|\sigma_{x'y'}| \leq \sum_{x,y}^{(\omega_{x'y'})} \sqrt{P_{x'|x} P_{y'|y}}|\rho_{xy}|.
\end{equation} 

\begin{ex}
	\label{ex:qubitchoi}
	In the qubit case, introduced in Example~\ref{ex:qubit1}, the Choi state is block-diagonal with blocks spanned by $\{\ket{00},\ket{11}\}$, $\{\ket{10}\}$ and $\{\ket{01}\}$:
	\small
	\begin{equation*}
	J[\E]=\frac{1}{2}\left[
	{
		\renewcommand{\arraystretch}{1.5}
		\begin{array}{cccc}
		P_{0|0}&C^{0|0}_{1|1}&0&0\\
		\left(C^{0|0}_{1|1}\right)^*&P_{1|1}&0&0\\
		0&0&1-P_{0|0}&0\\
		0&0&0&1-P_{1|1}\\
		\end{array}
	}
	\right],
	\end{equation*}
	\normalsize
	as in the elementary example of Sec.~\ref{sec:elementaryexample}.
	Positivity thus reads
	\begin{equation*}
	|\sigma_{10}|\leq \sqrt{P_{1|1}P_{0|0}}|\rho_{10}|.
	\end{equation*}
\end{ex}
	
\section{Covariant Markovian maps}
\label{appendix:markovian}
\subsection{Definition and Holevo's characterisation}

According to \ref{ass:markov}, a Markovian evolution is given by a one-parameter family of quantum channels constituting a quantum dynamical semigroup. The general form of the generator $\L_H$ is given by \cite{lindblad1976generators,gorini1976completely}
\begin{equation}
\L_H(\cdot)=\A(\cdot)-\tfrac{1}{2}\{\A^{\dag}(\iden),\cdot\}-i[\cdot,H],
\end{equation}
where $H$ is a Hermitian operator, $\A$ is a CP map, $\A^\dag$ is the adjoint of $\A$ (with respect to the Hilbert-Schmidt inner product, \mbox{$\tr{}{\rho\A(\sigma)}=\tr{}{\A^{\dag}(\rho)\sigma}$}) and $\{\cdot,\cdot\}$ denotes the anticommutator.

If the channel $\mathcal{E}=e^{\mathcal{L}_H t}$ generated by $\mathcal{L}_H$ is covariant, then $\L_H$ must be covariant, and it has been shown that both $\A$ and $\mathcal{H}(\cdot) = [H,\cdot]$ can also be chosen to be covariant \cite{holevo1993note}. Let \mbox{$\L=\L_H - i\mathcal{H}$}. Since \mbox{$[\L,\mathcal{H}]=0$} the evolution induced by $\L_H$ and $\L$ is the same up to an energy-preserving unitary: \mbox{$\E =e^{i\mathcal{H}_t} \circ e^{\L t}$}, where $e^{i\mathcal{H}_t} = e^{-i H t}(\cdot)e^{i Ht}$. Hence, from now on, we will ignore the term involving $H$ and consider Lindbladians of the form
\begin{equation}
\label{eq:lindbladian2}
\L(\cdot)=\A(\cdot)-\tfrac{1}{2}\{\A^{\dag}(\iden),\cdot\}.
\end{equation}

Since $\L$ is covariant, it acts on each mode independently (see Sec.~\ref{app_sec:modesofcoherence}). The action of $\L$ on the diagonal of a density matrix is therefore completely described by a matrix $L$ with elements
\begin{equation}
\label{eq:ldef}
L_{x'|x}:=\bra{x'} \mathcal{L}(\ketbra{x}{x})\ket{x'}.
\end{equation}
Recall that the covariant map $\mathcal{E}$ acts on the diagonal elements as a stochastic matrix $P$, hence $L$ is the generator of such a  matrix. In other words, $P$ must be an embeddable stochastic matrix \cite{davies2010embeddable}: $P=e^{L t}$. This implies that $L$ can be interpreted as a matrix of population transfer rates, similarly to the elementary example of Sec.~\ref{sec:elementaryexample}. Hence, $L$ satisfies $L_{x'|x}\geq 0$ for $x'\neq x$ and $\sum_{x'} L_{x'|x}=0$ \cite{davies2007linear}, which implies that $L_{x|x}\leq 0$ for all $x$. 

\subsection{Conditions on the generators of covariant Markovian maps}

Recall that $\A$, which appears in Eq.~\eqref{eq:lindbladian2}, is a covariant CP (but not necessarily TP) map. Denote the matrix elements of $\A$ by
\begin{equation}
A^{x'|x}_{y'|y} = \bra{x'}\A(\ketbra{x}{y})\ket{y'}.
\end{equation}
Reasoning as in Sec.~\ref{app_sec:positivitychoi}, the complete positivity of $\mathcal{A}$ is enforced by requiring $J[\mathcal{A}]\geq0$. Then the set of conditions
\begin{equation}
\label{eq:a_CP}
|A^{x'|x}_{y'|y}| \leq \sqrt{A_{x'|x} A_{y'|y}},
\end{equation}
where \mbox{$A_{x'|x}:=A^{x'|x}_{x'|x}$}, is necessary for ensuring that $\A$ is CP.

We now state some useful relations between the matrix elements of $\mathcal{A}$ and $\mathcal{L}$. Using the covariance of $\A$ it is straightforward to show that \mbox{$\mathcal{\A^\dag}(\iden) = \sum_{x',x} A_{x'|x} \ketbra{x}{x}$}. Hence we obtain
\begin{equation}
\label{eq:anticomm_els}
\bra{x'}\{\A^{\dag}(\iden),\ketbra{x}{y}\}\ket{y'} = \sum_z (A_{z|x}+A_{z|y})\delta_{xx'}\delta_{yy'},
\end{equation}
where $\delta_{xx'}$ denotes the Kronecker delta. So, in terms of the elements of $\A$ and $\L$, we have \mbox{$L_{x'|x} = A_{x'|x} - \sum_z A_{z|x} \delta_{x x'}$}. Importantly, \textit{any} element $L_{x'|x}$ can be expressed purely in terms of elements $A_{x'|x}$ for which $x'\neq x$:
\begin{equation}
\label{eq:a_l_equiv}
L_{x'|x} =
\begin{cases}  
-{\displaystyle \sum_{z\neq x} A_{z|x}} & \mbox{if } x'=x, \\
A_{x'|x} &\mbox{if } x'\neq x.
\end{cases}
\end{equation}

\section{Proof of Theorem 1}
\label{appendix:proof}

\markovianbound*

\noindent We first recall some notation and present identities that will be used in the proof. As in the main text, we define damping rates \mbox{$\gamma_{x'y'}:=(|L_{x'|x'}| + |L_{y'|y'}|)/2$} and express matrix elements in terms of their magnitudes and phase factors as $\rho_{xy}=|\rho_{xy}|\vartheta_{xy}$. The following two identities for the evolution of these terms are readily derived:
		\begin{align}
		\frac{d|\rho_{xy}|}{dt}&=\Re\left(\vartheta^*_{xy}\frac{d\rho_{xy}}{dt}\right),\label{eq:evolve_mod}\\
		|\rho_{xy}|\frac{d\vartheta_{xy}}{dt}&=\frac{d\rho_{xy}}{dt}-\vartheta_{xy}\frac{d|\rho_{xy}|}{dt} \label{eq:evolve_phase}.
		\end{align}
The strategy for the proof is as follows:
\begin{enumerate}[label=(\roman*)]
	\item Express the evolution of the absolute values of the density matrix element $|\rho_{x'y'}|$ in terms of the matrix elements of $\A$.
	
	\item Show that this expression may be bounded as
	\small
		\begin{equation}
		\label{eq:boundrateinequality}
		\frac{d|\rho_{x'y'}|}{dt} \leq -\gamma_{x'y'} |\rho_{x'y'}|  + \sum_{\mathclap{\substack{x\neq x'\\y \neq y'}}}^{\ompp{x}{y}} \sqrt{L_{x'|x} L_{y'|y}} |\rho_{xy}|.
		\end{equation}
	\normalsize
		
	\item Show that the solution of Eq.~\eqref{eq:boundrate}, i.e., of Eq.~\eqref{eq:boundrateinequality} with inequality sign replaced by an equality, gives an upper bound for the maximum coherence preservation. In other words, Eq.~\eqref{eq:bound} holds, and all that remains is to prove the tightness claims.
	
	\item Make a particular choice of $A^{x'|x}_{x'|x}$ (whilst leaving $L_{x'|x}$ unchanged), and find the evolution of the phase factor $\vartheta_{x'y'}$ under this choice.
	\item Demonstrate that this choice leaves $\vartheta_{x'y'}$ invariant when the phase-matching condition is satisfied.
	\item Verify that when the initial phase-matching condition holds, there is indeed a covariant CP map $\A$ that achieves the bound of Eq.~\eqref{eq:boundrateinequality}.

\end{enumerate}

\begin{proof}
	\begin{enumerate}[leftmargin=0cm,itemindent=0.7cm,labelwidth=\itemindent,labelsep=0cm,align=left,label=(\roman*)]
		\item The evolution of the system is given by \mbox{$d\rho/dt=\L\rho$}. Since $\L$ is covariant, each mode evolves independently as [see Eq.~\eqref{eq:covariancemodes}]
		\begin{equation}
		\label{eq:drho_dt_all}
		\frac{d\rho^{(\omega)}}{dt}=\L\rho^{(\omega)} = \sum^{(\omega)}_{x,y} \mathcal{L}(\ketbra{x}{y}) \rho_{xy}.
		\end{equation}
		According to Eq.~\eqref{eq:lindbladian2} and Eq.~\eqref{eq:anticomm_els} the evolution of any element $\rho_{x'y'}$ is
		\begin{equation}
		\label{eq:ratewithphase}
		\resizebox{.85\hsize}{!}{$\displaystyle{
				\frac{d\rho_{x'y'}}{dt}=\sum^{\ompp{x}{y}}_{x,y} \left[A^{x'|x}_{y'|y} - \frac{1}{2}\sum_z(A_{z|x} + A_{z|y})\delta_{xx'}\delta_{yy'}\right]\rho_{xy}.}$}
		\end{equation}		
	
We then use the identity given in Eq.~\eqref{eq:evolve_mod} to find an expression for $d|\rho_{x'y'}|/dt$. The summation over $x,y$ may be split up to isolate the term $A^{x'|x'}_{y'|y'}$, leaving a sum over indices $x,y$ such that \mbox{$(x,y)\neq(x',y')$}. Owing to covariance and non-degeneracy, this is equivalent to a sum such that $x \neq x'$ and $y\neq y'$ independently. We thus obtain
		\begin{equation}
		\label{eq:drho_dt_exact}
		\resizebox{.85\hsize}{!}{$\displaystyle{
				\frac{d|\rho_{x'y'}|}{dt}=-\Re(\alpha_{x' y'})|\rho_{x' y'}| + \sum_{\mathclap{\substack{x\neq x'\\y \neq y'}}}^{\ompp{x}{y}} \Re\left(A^{x'|x}_{y'|y} \vartheta^*_{x'y'} \rho_{xy} \right),}$}
		\end{equation}
		where \mbox{$\alpha_{x'y'} = \frac{1}{2}\sum_z(A_{z|x'} + A_{z|y'}) - A^{x'|x'}_{y'|y'}$}. Note that in the case of a non-degenerate Bohr spectrum this expression takes the particularly simple form
		\begin{equation}
		\label{eq:drho_dt_exact_nondegenerate}
		\frac{d |\rho_{x'y'}|}{dt} = - \Re(\alpha_{x'y'})|\rho_{x'y'}|.
		\end{equation}
		
		\item We now place bounds on the terms in this expression. Since $\A$ is CP we have [see Eq.~\eqref{eq:a_CP}]
		\begin{equation}
		\resizebox{.85\hsize}{!}{$\displaystyle{
				\Re\left(A^{x'|x}_{y'|y} \vartheta^*_{x'y'} \rho_{xy} \right)\leq |A^{x'|x}_{y'|y}| |\rho_{xy}| \leq \sqrt{A_{x'|x} A_{y'|y}} |\rho_{xy}|}$}
		\end{equation}
		and 
		\small
		\begin{align}
		\label{eq:rebound}
		\Re(\alpha_{x'y'}) &\geq \frac{1}{2}\sum_z \left( A_{z|x'} + A_{z|y'}\right)- |A^{x'|x'}_{y'|y'}| \nonumber \\
		&\geq \frac{1}{2}\sum_z \left( A_{z|x'} + A_{z|y'}\right)-\sqrt{A_{x'|x'} A_{y'|y'}} \nonumber  \\
		&\geq \frac{1}{2}\left( \sum_{z \neq x'} A_{z|x'} + \sum_{z \neq y'} A_{z|y'}\right) = \gamma_{x'y'},
		\end{align}
		\normalsize
		where the final inequality follows from the arithmetic-geometric mean inequality, and to get the final equality we use Eq.~\eqref{eq:a_l_equiv}. Applying the above bounds to Eq.~\eqref{eq:drho_dt_exact} and translating the expression into elements of $L_{x'|x}$ using Eq.~\eqref{eq:a_l_equiv} gives the bound on evolution that we seek, as given in Eq.~\eqref{eq:boundrateinequality}.
		
		\item Collecting all elements $|\rho_{x'y'}|$ of a given mode $\omega$ into a coherence vector $\v{c}^{(\omega)}$, and analogously for $\widetilde{\rho}_{x'y'}$ with a corresponding vector $\tilde{\v{c}}^{(\omega)}$, Eq.~\eqref{eq:boundrate} and Eq.~\eqref{eq:boundrateinequality} read
		\begin{equation}
		\label{eq:c}
		\frac{d\tilde{\v{c}}^{(\omega)}}{dt} = Q \tilde{\v{c}}^{(\omega)}, \quad \frac{d\v{c}^{(\omega)}}{dt} \leq Q \v{c}^{(\omega)},
		\end{equation}
		with initial conditions $\v{c}^{(\omega)}(0)= \tilde{\v{c}}^{(\omega)}(0)$ and the vector inequality denoting a set of component-wise inequalities. The off-diagonal elements of $Q$ are given by $\sqrt{L_{x'|x}L_{y'|y}}$ and are hence non-negative. Thus, Lemma~\ref{lem:differentialinequality} from Appendix~\ref{appendix:systemsofodes} implies that $\tilde{\v{c}}^{(\omega)}(t) \geq \v{c}^{(\omega)}(t)$ for all $t\geq 0$. This can be rewritten as the bound of Eq.~\eqref{eq:bound} for all elements of the mode. The same reasoning applies to any mode $\omega$, and so the result holds for every off-diagonal element of~$\rho$.
		
		\item We now begin our proof of attainability of the bound in Eq.~\eqref{eq:boundrateinequality}. Consider setting $A_{x'|x'}=0$ for all $x'$. The necessary condition for $\A$ to be CP, Eq.~\eqref{eq:a_CP}, then also imposes $A_{y'|y'}^{x'|x'}=0$. Note that this choice does not alter any element $L_{x'|x}$, which can be expressed using only elements $A_{x'|x}$ for which $x\neq x'$. Using Eq.~\eqref{eq:a_l_equiv} we find that $\Re(\alpha_{x'y'})=\gamma_{x'y'}$ and hence Eq.~\eqref{eq:drho_dt_exact} becomes
		\begin{equation}
		\label{eq:drho_dt_exact_pm}
		\resizebox{.85\hsize}{!}{$\displaystyle{
				\frac{d|\rho_{x'y'}|}{dt}=-\gamma_{x'y'}|\rho_{x' y'}| + \sum_{\mathclap{\substack{x\neq x'\\y \neq y'}}}^{\ompp{x}{y}} \Re\left(A^{x'|x}_{y'|y} \vartheta^*_{x'y'} \rho_{xy} \right).}$}
		\end{equation}
		Note that in the case of a non-degenerate Bohr spectrum the above equation gives \mbox{$d|\rho_{x'y'}|/dt=-\gamma_{x'y'} |\rho_{x'y'}|$} for all $x',y'$, which leads to simultaneous saturation of the bound for all coherence elements.			
			
		For the more complicated general case, from Eq.~\eqref{eq:ratewithphase} and with our particular choice of $A_{x'|x'}$ we have
		\begin{equation}
		\label{eq:drho_dt_all_pm}
		\frac{d\rho_{x'y'}}{dt}=-\gamma_{x'y'}\rho_{x' y'} + \sum_{\mathclap{\substack{x\neq x'\\y \neq y'}}}^{\ompp{x}{y}} A^{x'|x}_{y'|y} \rho_{xy}.
		\end{equation}
		The identity provided in Eq.~\eqref{eq:evolve_phase} then gives the evolution of the phase factor:
		\begin{equation}
		\resizebox{.85\hsize}{!}{$\displaystyle{
				|\rho_{x'y'}|\frac{d \vartheta_{x'y'}}{dt} = \sum_{\mathclap{\substack{x\neq x'\\y \neq y'}}}^{\ompp{x}{y}} \left[A^{x'|x}_{y'|y} \rho_{xy} - \vartheta_{x'y'}  \Re\left(A^{x'|x}_{y'|y} \vartheta^*_{x'y'} \rho_{xy}\right) \right]}.$}
		\end{equation}
		Since \mbox{$\rho_{xy}=|\rho_{xy}|\vartheta_{xy}$} and \mbox{$\rho^*_{xy}=|\rho_{xy}|\vartheta^*_{xy}$}, we find that
		\begin{equation}
		\label{eq:phase_evolve}
		\resizebox{.85\hsize}{!}{$\displaystyle{
				\frac{d \vartheta_{x'y'}}{dt} \propto \sum_{\mathclap{\substack{x\neq x'\\y \neq y'}}}^{\ompp{x}{y}} \left(A^{x'|x}_{y'|y} \vartheta_{xy} \vartheta^*_{x'y'} - {A^{x'|x}_{y'|y}}^* \vartheta_{x'y'} \vartheta^*_{xy} \right) |\rho_{xy}|}.$}
		\end{equation}
		
		\item Consider now the process that, for every $\ketbra{x}{y}$ belonging to the same mode as $\ketbra{x'}{y'}$, satisfies
		\begin{equation}
		\label{eq:phase_a}
		\frac{A^{x'|x}_{y'|y}}{|A^{x'|x}_{y'|y}|} = \vartheta_{x'y'}(0) \vartheta^*_{xy}(0).
		\end{equation}
		Note that again this choice does not affect $L_{x'|x}$. Eq.~\eqref{eq:phase_evolve} then becomes
		\small
		\begin{align}
		\frac{d \vartheta_{x'y'}}{dt} \propto \sum_{\mathclap{\substack{x\neq x'\\y \neq y'}}}^{\ompp{x}{y}} (&\vartheta_{x'y'}(0) \vartheta^*_{xy}(0) \vartheta_{xy} \vartheta^*_{x'y'} \nonumber \\
		-&\vartheta_{xy}(0) \vartheta^*_{x'y'}(0) \vartheta_{x'y'} \vartheta^*_{xy} ) |A^{x'|x}_{y'|y}| |\rho_{xy}|,
		\end{align}
		\normalsize
		which may be solved by taking phases factors constant for all $t$: \mbox{$\vartheta_{x y}=\vartheta_{x y}(0)$} and \mbox{$\vartheta_{x' y'}=\vartheta_{x'y'}(0)$}. Using this solution, Eq.~\eqref{eq:drho_dt_exact_pm} becomes
		\begin{equation}
		\frac{d|\rho_{x'y'}|}{dt}=-\gamma_{x'y'}|\rho_{x' y'}| + \sum_{\mathclap{\substack{x\neq x'\\y \neq y'}}}^{\ompp{x}{y}} |A^{x'|x}_{y'|y}||\rho_{xy}|.
		\end{equation}
		We can now choose $\A$ to be at the boundary of the CP constraint by fixing
		\begin{equation}
		\label{eq:a_CP_bound}
		|A^{x'|x}_{y'|y}| = \sqrt{A_{x'|x}A_{y'|y}}.
		\end{equation}
		Recalling that $A_{x'|x} = L_{x'|x}$ for $x \neq x'$, we conclude that under the above choices the inequality \eqref{eq:boundrateinequality} is tight.
		
		\item Finally, we show that there is indeed a covariant CP map $\A$ that realises the above evolution when the initial phase-matching condition holds. Consider a quantum channel $\A$ given by its Kraus decomposition $\{K_\omega\}$ with
		\begin{equation}
		\label{eq:krausoptimal}
		K_{\omega} = \sum^{(\omega)}_{x,y} \vartheta_{xy}(0) \sqrt{A_{x|y}}\ketbra{x}{y}.
		\end{equation}
		Using \mbox{$A^{x'|x}_{y'|y} = \sum_{\omega}\bra{x'}K_\omega \ketbra{x}{y} K^\dag_\omega \ket{y'}$} we find that
		\begin{equation}
		\label{eq:optimal_A}
		A^{x'|x}_{y'|y} = \vartheta_{x'x}(0) \vartheta^*_{y'y}(0)  \sqrt{A_{x'|x}A_{y'|y}}.
		\end{equation}
		Applying the phase-matching condition gives
		\begin{equation}
		\label{eq:finalchoiceofA}
		A^{x'|x}_{y'|y} = \vartheta_{x'y'}(0) \vartheta^*_{xy}(0)  \sqrt{A_{x'|x}A_{y'|y}},
		\end{equation}
		which is readily seen to satisfy Eqs.~\eqref{eq:phase_a} and \eqref{eq:a_CP_bound}, as well as allowing the choice $A_{x'|x'}=0$ that results in $A^{x'|x'}_{y'|y'}=0$. The covariance of $\A$ immediately follows from Proposition~7 of Ref.~\cite{marvian2016quantify}, which states that if each $K_\omega$ in the Kraus decomposition is a mode~$\omega$ operator [in the same sense as each $\rho^{(\omega)}$ in the decomposition of Eq.~\eqref{eq:modesofcoherencerho}], then the induced map is covariant. 
		
Making the choice given in Eq.~\eqref{eq:finalchoiceofA} for every element $x',y'$ of mode $\omega$, we see that our inequality for $d\v{c}^{(\omega)}/dt $ [Eq.~\eqref{eq:c}] is saturated. Hence, with these choices, the evolution of every element in a mode achieves the claimed bound tightly.

	\end{enumerate}
	
\end{proof}

\section{Proof of tightness of Theorem \ref{thm:non_markovian_bound}}
\label{app:proof_of_tightness}

In Appendix~\ref{app_sec:positivitychoi} we proved that the bound specified in Eq.~\eqref{eq:non_markovian_bound} of Theorem~\ref{thm:non_markovian_bound} holds, a result first given in Ref.~\cite{lostaglio2015quantum}. Here we show under what conditions the bound is tight. We achieve this by providing an explicit construction of a covariant channel that saturates the bound.

Recall that we require the Choi state $J[\E]$ to be positive semidefinite. By noting that
\begin{equation}
	\sum_{x',x} \sum^{\omp{x}{x}}_{y',y} (\cdot)=	\sum_{\omega}\sum_{x',x}^{(\omega)} \sum^{(\omega)}_{y',y} (\cdot)
\end{equation}
we can rewrite Eq.~\eqref{eq:choi} in the following form, which makes the block-diagonal structure of $J[\E]$ evident:
\begin{equation}
	\label{eq:choi2}
	J[\E]= \frac{1}{d} \sum_{\omega}\sum_{x',x}^{(\omega)} \sum^{(\omega)}_{y',y} C^{x'|x}_{y'|y}\ketbra{x'x}{y'y},
\end{equation}

Now, given any population transfer matrix $P$, we can choose each block $\omega$ of the Choi state to be an unnormalised pure state $\ketbra{\psi_{\omega}}{\psi_{\omega}}$, where 
\begin{equation}
\ket{\psi_{\omega}}=\frac{1}{\sqrt{d}}\sum^{(\omega)}_{x',x} \varphi_{x'x} \sqrt{P_{x'|x}} \ket{x'x}
\end{equation}
and $\varphi_{x'x}$ are phase factors. This way we ensure positivity of $J[\E]$ and the corresponding quantum channel is given by \mbox{$\E(\cdot) = \sum_{\omega}K_{\omega} (\cdot) K_{\omega}^\dag$} with Kraus operators
\begin{equation}
\label{eq:optimal_kraus}
K_{\omega}=\sum^{(\omega)}_{x',x} \varphi_{x'x} \sqrt{P_{x'|x}} \ketbra{x'}{x}.
\end{equation}
Using Proposition~7 of Ref.~\cite{marvian2016quantify} one can directly verify that these Kraus operators generate a time-translation symmetric channel. The matrix elements $C^{x'|x}_{y'|y} $ are given by
\begin{equation}
\label{eq:optimal_choi}
C^{x'|x}_{y'|y} = \varphi_{x'x}\varphi^*_{y'y} \sqrt{P_{x'|x}P_{y'|y}}.
\end{equation}
Such a channel transforms populations according to the population transfer matrix $P$; and writing $\sigma=\E(\rho)$, we find that coherence terms transform as
\begin{equation}
\label{eq:choi_coherence}
\sigma_{x'y'} = \sum_{x,y}^{\ompp{x}{y}} \sqrt{P_{x'|x} P_{y'|y}}|\rho_{xy}|\varphi_{x'x}\varphi_{y'y}^*\vartheta_{xy},
\end{equation}
where we recall that \mbox{$\rho_{xy}=|\rho_{xy}|\vartheta_{xy}$}.

Now the crucial question is whether the phase factors $\{\varphi_{x'x}\}$ can be chosen in such a way as to saturate the bound given in Eq.~\eqref{eq:non_markovian_bound}. Comparing Eqs.~\eqref{eq:choi_coherence}~and~\eqref{eq:non_markovian_bound}, we see that the choice \mbox{$\varphi_{x'x} \varphi^*_{y'y}= \vartheta^*_{xy}$} ensures saturation of the bound. First, let us consider a simple case when the mode $\omega$ contains no overlapping elements, i.e., for every two distinct coherence terms $\sigma_{x'y'}$ and $\sigma_{xy}$ in mode $\omega$, all indices $x,y,x',y'$ are distinct. Then, for every $x',y'$ we can make the choice of phases \mbox{$\varphi_{x'x} = \vartheta^*_{xy}$} and $\varphi^*_{y'y}=1$ for all $x,y$ such that $\omega_{xy} = \omega_{x'y'} =\omega$ (note that \mbox{$\varphi_{x'x} = \vartheta^*_{xy}$} is meaningful as, due to non-degeneracy of the Hamiltonian, a single index $x$ is enough to specify $y$). This leads to saturation of the bound. 

In the general case a mode~$\omega$ consists of coherence elements $\sigma_{x_n x_{n-1}}, \sigma_{x_{n-2} x_{n-3}} \dots, \sigma_{x_{1} x_0}$ with $x_{i}$ sorted in non-decreasing energy order (\mbox{$\omega_{x_{i}} \leq \omega_{x_{i+1}}$}), meaning that some of $x_i$ are equal to $x_{i+1}$ (corresponding to overlapping coherence elements). One can now make the following choice of $\{\varphi_{x'x}\}$ to saturate the bound given in Eq.~\eqref{eq:non_markovian_bound}. We set initial conditions \mbox{$\varphi_{x_i x_0} = \varphi_{x_0 x_i} = 1$} for all $i=0,...,n$, and impose iteratively \mbox{$\varphi_{x_{i+1}x_{j+1}}=\varphi_{x_i x_{j}}\vartheta^*_{x_{j+1} x_{j}}$} for all $i,j=0,...,n-1$. This choice is depicted in Fig.~\ref{fig:phase_match_NM} for the example case of a 3-element mode.
\begin{figure}
	\includegraphics[width=0.8\columnwidth]{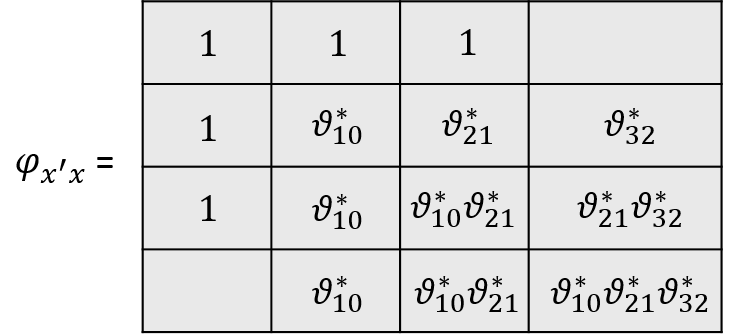}
	\caption{\label{fig:phase_match_NM} \emph{Choice of phases for overlapping elements of a mode.} The choice of phases $\varphi_{x' x}$ that saturates the bound of Theorem~\ref{thm:non_markovian_bound} for all elements of a mode. The phases are given for the case of three overlapping elements: $\sigma_{10}$, $\sigma_{21}$ and $\sigma_{32}$ (hence \mbox{$x_i = 0,...,3$}), but the table can be easily extended noticing the structure of each diagonal. The phases in the corners of the matrix, in this case $\varphi_{30}$ and $\varphi_{03}$, do not need to be fixed to saturate the bound.}
\end{figure}

Moreover, if it happens that phase factors of the initial state $\rho$ are of the form $\vartheta_{xy}= \phi_x \phi^*_y$, one can saturate the bound simultaneously for all coherence elements. This can be achieved by choosing $\varphi_{x'x}= \phi_x^*$ for all $x,x'$. Finally, if the Bohr spectrum is non-degenerate, the summation in Eq.~\eqref{eq:choi_coherence} consists of a single term with $x=x'$ and $y=y'$. Hence, the bound is saturated independently of the choice of $\varphi_{x'x}$.

\section{Proof of Corollary~\ref{corol:attractor2}}
\label{appendix:cor4}

\attractor*

\begin{proof}

We begin by proving that uniqueness of the fixed point implies that all transition probabilities, $P_{x'|x}(t)$, are non-zero at some finite time $\tilde t>0$ (recall that \mbox{$P(t) = e^{L t}$}). As we have seen in Eq.~\eqref{eq:pop_evol2}, if the fixed point is unique then every initial distribution converges to $\v{\pi}$ as $t \rightarrow \infty$. Consider the set of standard basis vectors $\{\v{\epsilon}^x\;|\;x=1,\dots,d\}$, where $\v{\epsilon}^x$ denotes the unit vector with a $1$ in the $x^\mathrm{th}$ coordinate and $0$s elsewhere. We have $P(t)\v{\epsilon}^x \rightarrow \v{\pi}$ as $t \rightarrow \infty$. Hence, for every $x,x'$ there exists some $t_{x,x'}>0$ such that $P_{x'|x}(t)>0$ for all $t \geq t_{x,x'}$. Taking \mbox{$\tilde t = \max_{x,x'}t_{x,x'}$}, we have \mbox{$P_{x'|x}(\tilde t) > 0$} for every $x,x'$. 	
	
We now prove the second claim that all coherence terms must vanish as $t\rightarrow\infty$. Consider the elements of a mode~$\omega$. From Theorem~\ref{thm:non_markovian_bound} we have
\begin{equation}
|\rho_{x'y'}(\tilde t)| \leq \sum_{x,y}^{(\omega)} \sqrt{P_{x'|x}(\tilde t) P_{y'|y}(\tilde t)}|\rho_{xy}(0)|.
\end{equation}
The definition of $S_\omega$ given in Eq.~\eqref{eq:coherenceonenorm} yields
	\begin{eqnarray*}
		S_\omega(\rho(\tilde t)) &\leq& \sum_{x,y}^{(\omega)} |\rho_{xy}(0)|  \sum_{x',y'}^{(\omega)} \sqrt{P_{x'|x}(\tilde t) P_{y'|y}(\tilde t)} \\
		& \leq &  \sum_{x,y}^{(\omega)} |\rho_{xy}(0)| B_{xy}(\tilde t),
	\end{eqnarray*}
	where we have used the arithmetic-geometric mean inequality and defined
	\begin{equation}
	B_{xy}(\tilde t):=\tfrac{1}{2}\sum_{x',y'}^{(\omega)}\left(P_{x'|x}(\tilde t) + P_{y'|y}(\tilde t)\right).
	\end{equation}
There are at most $d-1$ terms in this summation. Given that \mbox{$\sum^d_{x'=1}P_{x'|x}(\tilde t) =1$} and that \mbox{$P_{x'|x}(\tilde t)>0$} for every $x',x$, it follows that \mbox{$\sum_{x',y'}^{(\omega)}P_{x'|x}(\tilde t) < 1$}, so that \mbox{$B_{xy}(\tilde t)<1$} for every $x,y$. Taking \mbox{$B = \max_{x,y}B_{xy}(\tilde t)$}, we have \mbox{$S_\omega(\rho(\tilde t)) \leq B 	S_\omega(\rho(0))$} with $B<1$. 
	
As the process is time-homogeneous, the presented reasoning leads to
	\begin{equation}
	S_\omega(\rho((n+1)\tilde t)) \leq B S_\omega(\rho(n\tilde t)), 
	\end{equation}
	for all $n \in \mathbb{N}$. This implies \mbox{$S_\omega(\rho(n\tilde t))< B^n S_\omega(\rho(0))$}. Taking the limit $n \rightarrow \infty$ we obtain \mbox{$ S_\omega(\rho(t)) \rightarrow 0$} as \mbox{$t \rightarrow \infty$}, which implies that \mbox{$\lim_{t\rightarrow\infty}|\rho_{x'y'}(t)| = 0$} for every $x,y$ with $\omega_{x'}-\omega_{y'} = \omega$. The same reasoning applies to every mode of coherence $\omega$, and hence proves the second claim.
	
Let us now prove the first claim. To achieve this we will use a theorem of L\'{e}vy~\cite{freedman1983approximating} that states that either $P_{x'|x}(t)=0$ for all $t\geq 0$ or $P_{x'|x}(t)>0$ for all \mbox{$t>0$}. Since we proved that $P_{x'|x}(\tilde t)>0$ for all $x',x$, we conclude that $P_{x'|x}(t)>0$ for all $t>0$ and for all $x',x$. Using this result we can apply the same reasoning as in the discussion above to show that for every $t'>t$ we have $S_\omega(\rho(t')) \leq B S_\omega(\rho(t))$, with $B<1$, and hence $S_\omega(\rho(t')) < S_\omega(\rho(t))$. 
	
\end{proof}

\section{Optimal coherence transfer for a qutrit}
\label{appendix:qutrit}

As stated in the main text, for a qutrit with equidistant spectrum, optimal coherence evolution is governed by $d\v c/dt=Q\v c$, where
\small
\begin{equation*}\renewcommand{\arraystretch}{1.3}
Q=\left[
\begin{array}{cc}
-\gamma_{10} & \sqrt{L_{0|1}L_{1|2}} \\
\sqrt{L_{1|0}L_{2|1}} & -\gamma_{21}
\end{array}
\right],
\end{equation*}	
\normalsize
and \mbox{$\v{c}:=(|\rho_{10}|, |\rho_{21}|)$}. We recall the definition of the damping rate, \mbox{$\gamma_{xy}=(|L_{x|x}| + |L_{y|y}|)/2$}. The (magnitude of) the coherence element $\rho_{21}(t)$ evolves according to Eq.~\eqref{eq:oscillator_equation}, which we repeat here for convenience:
\begin{equation}
\label{eq:c_2evol}
\frac{d^2 c_2}{dt^2}-\mathrm{Tr}(Q) \frac{d c_2}{dt}+\det(Q) c_2=0.
\end{equation}
We wish to achieve optimal transfer of coherence from density matrix element $\rho_{10}$ to density matrix element $\rho_{21}$. Concretely, we find $\max_{L,t}c_2(t)$, where the optimisation runs over all population transfer rates $L_{x'|x}$ and times $t$.

The dynamics described by Eq.~\eqref{eq:c_2evol} are precisely those of a damped harmonic oscillator with damping $\eta$ and natural frequency $\nu$, where:
\begin{align}
\eta&=-\mathrm{Tr}(Q)=\gamma_{10}+\gamma_{21},\\
\nu^2&=\det(Q)=\gamma_{10}\gamma_{21}-\sqrt{L_{0|1}L_{1|2}L_{1|0}L_{2|1}}.
\end{align}
Given these expressions, we define \mbox{$D:=\eta^2-4\nu^2$}, which can be straightforwardly evaluated as
\begin{equation}
D=(\gamma_{10}-\gamma_{21})^2+4\sqrt{L_{0|1}L_{1|2}L_{1|0}L_{2|1}}.
\end{equation}
It is clear that $D\geq 0$, which corresponds to an overdamped ($D>0$) or critically damped ($D=0$) oscillator.

For the case that $D>0$, the solution of Eq.~\eqref{eq:c_2evol} is given by \mbox{$c_2(t)=A_+e^{p_+t}+A_-e^{p_-t}$}, where \mbox{$p_\pm=\frac12(-\eta\pm\sqrt{D})$} and $A_\pm$ are constants determined by the initial conditions. Given that $c_2(0)=0$ and $\frac{dc_2}{dt}(0)=\sqrt{L_{1|0}L_{2|1}}c_1(0)$, one obtains 
\begin{equation}
A_+=-A_-=\sqrt{L_{1|0}L_{2|1}}c_1(0)/\sqrt{D}.
\end{equation} 
The solution may then be written as
\begin{equation}
\label{eq:c_2full_soln}
c_2(t)=2c_1(0)\sqrt{\frac{L_{1|0}L_{2|1}}{D}}\,e^{-\tfrac12\eta t}\mathrm{sinh}\left(\tfrac12\sqrt{D} t\right).
\end{equation}
In addition to the prefactor involving $L_{1|0}$ and $L_{2|1}$, the evolution $c_2(t)$ depends on the matrix elements $L_{x'|x}$ through the expressions for $\eta$ and $D$. To analytically perform the full optimisation $\max_{L,t}c_2(t)$, subject to the constraints $L_{x|x}\leq 0$ and $\sum_{x'} L_{x'|x} = 0$, appears to be highly non-trivial. However, we can straightforwardly derive an upper bound for the optimal coherence transfer. Consider the solution $\widetilde c_2(t)$ that holds when $\nu=0$, so that $\sqrt{D}=\eta$. Eq.~\eqref{eq:c_2full_soln} then becomes
\begin{align}
\nonumber
\widetilde c_2(t)&=2c_1(0)\sqrt{\frac{L_{1|0}L_{2|1}}{\eta^2}}\,e^{-\tfrac12\eta t}\mathrm{sinh}\left(\tfrac12\eta t\right),\\
\label{eq:c_2_wt_soln2}
&=c_1(0)\sqrt{\frac{L_{1|0}L_{2|1}}{\eta^2}}\,(1-e^{-\eta t}).
\end{align}
The solution for $\widetilde c_2$ provides an upper bound for $c_2$, i.e., for all $t$ we have $c_2(t) \leq \widetilde c_2(t)$, and hence \mbox{$\max_t c_2(t)\leq \max_t \widetilde c_2(t)$}. To see this, we observe that
\begin{equation}
\frac{c_2}{\widetilde c_2}=\frac{\mathrm{sinhc}(\frac12 \sqrt{D} t)}{\mathrm{sinhc}(\frac12 \eta t)},
\end{equation}
where $\mathrm{sinhc}$ is the hyperbolic $\mathrm{sinc}$ function, \mbox{$\mathrm{sinhc}(z)=\mathrm{sinh}(z)/z$}. Since $\mathrm{sinhc}(z)$ is monotonically increasing for $z\geq 0$, and $\sqrt{D}\leq\eta$, we have $c_2(t)\leq \widetilde c_2(t)$. This inequality may also be thought of physically in terms of the analogy with a damped harmonic oscillator: $\nu$ gives a measure of the linear ``spring'' force. The displacement of an oscillator from its equilibrium position is always bounded by the displacement when there is no restoring force.

We now proceed with the optimisation $\max_{L,t} \widetilde c_2(t)$. Eq.~\eqref{eq:c_2_wt_soln2} achieves its maximum as $t\rightarrow \infty$, so that $\max_{t} \widetilde c_2(t)=c_1(0) f(L)$, where
\begin{equation}
f(L)=\frac{2\sqrt{L_{1|0}L_{2|1}}}{|L_{0|0}|+|L_{2|2}|+2|L_{1|1}|}.
\end{equation}
To perform the maximisation over $L$, we first note that the only dependence of $f(L)$ on elements $L_{x'|2}$ is through the $|L_{2|2}|$ in the denominator. To maximise $f(L)$ we may thus freely set $L_{2|2}=0$ (and hence the population transfer rate matrix will also have $L_{1|2}=L_{0|2}=0$). Given the constraints $\sum_{x'}L_{x'|0}=\sum_{x'}L_{x'|1}=0$, it is clear that $f(L)$ is maximised when \mbox{$L_{2|0}=L_{0|1}=0$}, so that $|L_{0|0}|=L_{1|0}$ and $|L_{1|1}|=L_{2|1}$. Hence we have \mbox{$f(L)=2\sqrt{L_{1|0}L_{2|1}}/(L_{1|0}+2L_{2|1})$}. According to the inequality of arithmetic and geometric means, this is maximised when we set $L_{1|0}=2L_{2|1}$, yielding \mbox{$\max_L f(L) = 1/\sqrt2$}. Hence, we find that 
\begin{equation}
\max_{L,t}c_2(t)\leq c_1(0)/\sqrt2\approx0.707 c_1(0),
\end{equation}
as given in the main text.

\section{Differential inequalities for a system of ODEs}
\label{appendix:systemsofodes}
Here we present a technical result that was used in the proof of Theorem~\ref{thm:markovian_bound}, but which may be of some independent interest. Results on differential inequalities tend to be limited to simple cases, e.g. Gr\"onwall's lemma~\cite{gronwall1919note} applies to the linear first order differential inequality $du(t)/dt\leq\alpha(t)u(t)$. We prove a result that can be applied to a system of linear first order differential inequalities, such as those governing the evolution of off-diagonal elements according to the Bloch equations. This result can likely be understood as a special case of general comparison theorems (see, e.g., Ref.~\cite{szarski1965differential}, Chap.~3). Here we give a proof that does not require such sophisticated technical machinery.

Given two $n$-dimensional vectors $\v{x}$ and $\v{y}$, the notation $\v{x} \geq \v{y}$ will denote the component-wise inequality \mbox{$x_i \geq y_i$} for all $i=1,\dots,n$.  A system of linear first order differential equations may be written as $\dot{\v{x}}(t)=M \v{x}(t)$, where $\dot{\v{x}}:=d\v{x}/dt$ and $M$ is some $n\times n$ matrix. If we instead have the differential inequality $\dot{\v{x}}(t)\leq M \v{x}(t)$ then what can be inferred about the evolution?

\begin{lem}
	\label{lem:differentialinequality}
	Let $\v{x}: \R \mapsto \R^n$ be the solution of the system of linear differential equations
	\begin{equation}
	\dot{\v{x}}(t) = M \v{x}(t), \quad \v{x}(0) = \v{u}, 
	\end{equation}
	where $M_{ij} \geq 0$ for all $i \neq j$. If $\v{y}: \R \mapsto \R^n$ is a differentiable function satisfying
	\begin{equation}
	\dot{\v{y}}(t) \leq M \v{y}(t), \quad \v{y}(0) = \v{u}, 
	\end{equation}
	then $\v{x}(t) \geq \v{y}(t)$ for all $t \geq 0$.
\end{lem}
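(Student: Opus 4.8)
The plan is to prove the comparison lemma via a standard perturbation-and-continuity argument, exploiting the sign structure $M_{ij}\geq 0$ for $i\neq j$, which is exactly the Metzler/quasimonotonicity property that makes the associated flow order-preserving.

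\textbf{Step 1: Reduce to strict inequalities by an $\varepsilon$-perturbation.} For $\varepsilon>0$, let $\v{x}^{\varepsilon}(t)$ solve $\dot{\v{x}}^{\varepsilon}=M\v{x}^{\varepsilon}+\varepsilon\v{1}$ with $\v{x}^{\varepsilon}(0)=\v{u}$, where $\v{1}=(1,\dots,1)^{T}$. By continuous dependence of solutions of linear ODEs on parameters, $\v{x}^{\varepsilon}(t)\to\v{x}(t)$ uniformly on compact time intervals as $\varepsilon\to 0$. So it suffices to show $\v{x}^{\varepsilon}(t)\geq\v{y}(t)$ for all $t\geq 0$ and all $\varepsilon>0$, and then let $\varepsilon\to 0$. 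Set $\v{z}(t):=\v{x}^{\varepsilon}(t)-\v{y}(t)$, so $\v{z}(0)=\v{0}$ and $\dot{\v{z}}(t)\geq M\v{z}(t)+\varepsilon\v{1}$ (component-wise), where I have used $\dot{\v{y}}\leq M\v{y}$ with the appropriate sign. In particular $\dot{\v{z}}(0)\geq\varepsilon\v{1}>\v{0}$, so $\v{z}(t)>\v{0}$ for small $t>0$.

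\textbf{Step 2: First-exit time argument.} Suppose for contradiction that $\v{z}(t)\geq\v{0}$ fails somewhere. Let $t^{*}=\inf\{t>0: z_{i}(t)<0 \text{ for some } i\}$. By continuity and the fact that $\v{z}>\v{0}$ on a right-neighbourhood of $0$, we have $t^{*}>0$, $\v{z}(t^{*})\geq\v{0}$, and there is an index $k$ with $z_{k}(t^{*})=0$, while $z_{j}(t^{*})\geq 0$ for all $j$. Evaluate the $k$-th component of the differential inequality at $t=t^{*}$:
\begin{equation}
\dot{z}_{k}(t^{*}) \geq \sum_{j} M_{kj} z_{j}(t^{*}) + \varepsilon = M_{kk}\cdot 0 + \sum_{j\neq k} M_{kj} z_{j}(t^{*}) + \varepsilon \geq \varepsilon > 0,
\end{equation}
where the crucial point is that each term $M_{kj}z_{j}(t^{*})$ with $j\neq k$ is nonnegative because $M_{kj}\geq 0$ and $z_{j}(t^{*})\geq 0$ (this is precisely where the off-diagonal sign hypothesis on $M$ enters). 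But $z_{k}(t^{*})=0$ with $z_{k}(t)\geq 0$ for $t<t^{*}$ forces $\dot{z}_{k}(t^{*})\leq 0$, a contradiction. Hence $\v{z}(t)\geq\v{0}$ for all $t\geq 0$, i.e.\ $\v{x}^{\varepsilon}(t)\geq\v{y}(t)$. Letting $\varepsilon\to 0$ gives $\v{x}(t)\geq\v{y}(t)$ for all $t\geq 0$.

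\textbf{Main obstacle.} The only delicate point is making the first-exit argument rigorous: one must ensure $t^{*}>0$ (handled by the strict initial push $\dot{\v z}(0)\geq\varepsilon\v1$) and extract the one-sided derivative sign $\dot z_k(t^*)\leq 0$ from $z_k$ being nonnegative before $t^*$ and zero at $t^*$ — a consequence of differentiability of $\v{z}$. An alternative that sidesteps the exit-time bookkeeping is to write $\v{y}(t)=e^{Mt}\v{u}-\int_0^t e^{M(t-s)}\v{r}(s)\,ds$ where $\v{r}(s):=M\v{y}(s)-\dot{\v{y}}(s)\geq\v{0}$, and to invoke that $e^{Mt}$ has nonnegative entries for all $t\geq 0$ (equivalently, $M$ being Metzler implies $e^{Mt}\geq 0$, proved e.g.\ by writing $M=cI+N$ with $c$ large so $N\geq 0$ entrywise and $e^{Mt}=e^{ct}e^{Nt}$); then $\v{x}(t)-\v{y}(t)=\int_0^t e^{M(t-s)}\v{r}(s)\,ds\geq\v{0}$ immediately. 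I would present the $\varepsilon$-perturbation proof as the main line since it is elementary and self-contained, and mention the semigroup-positivity viewpoint as a remark.
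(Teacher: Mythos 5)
Your main argument is correct and is essentially the same proof as the paper's: both regularise to strict inequalities (the paper adds a constant drive $\v{c}^{(m)}$ and also nudges the initial data upwards, you add $\varepsilon\v{1}$ with equal initial data and use the strict push $\dot{\v{z}}(0)\geq\varepsilon\v{1}$ to guarantee $t^{*}>0$), then run a first-crossing-time contradiction that exploits $M_{kj}\geq 0$ for $k\neq j$ exactly as the paper does, and finally pass to the limit of vanishing perturbation. The only small slip is in your optional semigroup remark: for $N=M-cI\geq 0$ entrywise you need $c\leq\min_i M_{ii}$ (i.e.\ $c$ sufficiently negative), or equivalently write $e^{Mt}=e^{-ct}e^{(M+cI)t}$ with $c>0$ large; note also that the Duhamel route tacitly assumes $\dot{\v{y}}$ is integrable so that variation of constants applies, whereas your main line (like the paper's) needs only pointwise differentiability.
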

\begin{proof}
Denote by $\P$ the set of differential equations and inequalities for $\v{x}$ and $\v{y}$ given in the statement of the Lemma. Given $\v{c} > \v{0}$, we first prove the following implication for a modified problem $\P'$:
\begin{equation}\label{eq:pprime}
\renewcommand{\arraystretch}{2}
\begin{rcases}
	\dot{\v{x}}(t) = M \v{x}(t) + \v{c} \\
	\dot{\v{y}}(t) < M \v{y}(t) + \v{c} \\
	\v{x}(0) > \v{y}(0)
	\end{rcases} \Longrightarrow \v{x}(t) \geq \v{y}(t) \; \forall t \geq 0.\\
\end{equation}

By continuity, there exists some $\tilde{t} >0$ such that \mbox{$\v{x}(t) > \v{y}(t)$} for all $t \in [0,\tilde{t})$. Let us define the `overtaking' time as
	\begin{equation}
	t^\star = \sup_{\tilde{t} > 0} \, \left\{\tilde{t}\;|\;\v{x}(\tilde{t})> \v{y}(\tilde{t}) \; \forall t \in [0,\tilde{t})\right\}.
	\end{equation}
By contradiction, assume $t^\star < \infty$. Continuity implies $\v{x}(t^\star) \geq \v{y}(t^\star)$. By definition of $t^\star$, we have that
	\begin{enumerate}
		\item There exists $i$ such that  $x_i(t^\star) = y_i(t^\star)$; \label{overcome1}
		\item There exists a sequence $\{t_\alpha\}$ with $t_\alpha \searrow t^\star$ such that \mbox{$x_i(t_\alpha)< y_i(t_\alpha)$} for all $\alpha$. \label{overcome2}
\end{enumerate}
Using condition~\ref{overcome1} and continuity:
\small
\begin{eqnarray*}
\dot{y}_i(t^\star) &<& \sum_j M_{ij}y_j(t^\star) + c_i = M_{ii}x_i(t^\star) + \sum_{j \neq i} M_{ij} y_j(t^\star) + c_i \\
		&\leq& M_{ii}x_i(t^\star) + \sum_{j \neq i} M_{ij} x_j(t^\star) + c_i = \dot{x}_i(t^\star).
\end{eqnarray*}\normalsize
But this implies that there is a right neighbourhood of $t^\star$ in which  $x_i(t)>y_i(t)$, which is in direct contradiction with condition~\ref{overcome2}. Hence it must be $t^\star = \infty$, i.e. overtaking can never take place and the implication given in Eq.~$\eqref{eq:pprime}$ holds: we have \mbox{$\v{x}(t) \geq \v{y}(t) \; \forall t \geq 0$}.

Now consider the following sequence of problems $\P_m$ with $m \in \mathbb{N}$:
\begin{equation*}
\left\{
\begin{array}{ll}
\dot{\v{x}}^{(m)}(t) = M \v{x}^{(m)}(t) + \v{c}^{(m)}, & \v{x}^{(m)}(0)= \v{u}^{(m)}, \\
\dot{\v{y}}(t) \leq M \v{y}(t),&\v{y}(0) = \v{u},
\end{array}\right.
\end{equation*}\vspace{0.01cm}

\noindent where $\v{c}^{(m)}:=(1/m,\dots,1/m)$ and $\v{u}^{(m)}\searrow \v{u}$, \mbox{$\v{u}^{(m+1)}< \v{u}^{(m)}$}. Clearly any $\P_m$ is a problem of the form $\P'$, since we have \mbox{$\dot{\v{y}}(t) < M\v{y}(t) + \v{c}^{(m)}$} and \mbox{$\v{x}^{(m)}(0)>\v{y}(0)$} (as $\v u^{(m)}>\v u$). Using Eq.~\eqref{eq:pprime} we therefore deduce that $\v{x}^{(m)}(t) \geq \v{y}(t)\; \forall t \geq 0, \forall m \in \mathbb{N}$.

Note that $\v{x}^{(\infty)}$ solves the desired problem $\P$ and so $\v{x}^{(m)} \rightarrow \v{x}$ pointwise as $m \rightarrow \infty$ (in fact, one can show the convergence is locally uniform, since comparing $\P_m$ and $\P_{m+1}$ gives \mbox{$\v{x}^{(m+1)}(t) \leq \v{x}^{(m)}(t)$}). Hence we have that \mbox{$\v{x}(t) \geq \v{y}(t)\;\forall t\geq 0$}, as required. 
\end{proof}

\section{Embeddable stochastic matrices}
	\label{appendix_embeddable}
	
	The proof of Theorem~\ref{thm:embeddable} requires the following results that we state without proofs, instead pointing to references where these can be found:
	
	\begin{lem}[Lemma~12.3.4~and~12.3.5 of Ref.~\cite{davies2007linear}]
		\label{thm:generator_form}
		A matrix $B$ is a generator of a stochastic matrix, i.e., \mbox{$A=e^B$} for some stochastic matrix $A$, if and only if it is of the form $B=\alpha(C-\iden)$ for some $\alpha\geq 0$ and stochastic matrix~$C$.
	\end{lem}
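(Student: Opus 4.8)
The plan is to prove the two directions of the equivalence separately, reading ``generator of a stochastic matrix'' in the semigroup sense, namely that $A(t):=e^{tB}$ is a (column-)stochastic matrix for every $t\geq 0$ (the single instance $A=e^B$ is recovered at $t=1$). The easy direction is the implication from the form $B=\alpha(C-\iden)$ to stochasticity of $e^{tB}$. Here I would simply expand
\begin{equation*}
e^{tB} = e^{t\alpha(C-\iden)} = e^{-t\alpha}\,e^{t\alpha C} = e^{-t\alpha}\sum_{k=0}^{\infty}\frac{(t\alpha)^k}{k!}\,C^k.
\end{equation*}
Since a product of stochastic matrices is stochastic, each $C^k$ is stochastic, and the Poisson weights $e^{-t\alpha}(t\alpha)^k/k!$ are non-negative and sum to $1$; thus $e^{tB}$ is a convex combination of stochastic matrices and is therefore itself stochastic. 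This already yields $A=e^B$ stochastic as the special case $t=1$.

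For the converse I would recover the infinitesimal conditions on $B$ from stochasticity of the entire semigroup. Writing $B=\lim_{t\to 0^+}(e^{tB}-\iden)/t$ and using that $(e^{tB})_{x'x}\geq 0$ for all $t\geq 0$, the off-diagonal entries satisfy $B_{x'|x}=\lim_{t\to 0^+}(e^{tB})_{x'x}/t\geq 0$ for $x'\neq x$; and differentiating the column-stochasticity relation $\sum_{x'}(e^{tB})_{x'x}=1$ at $t=0$ gives $\sum_{x'}B_{x'|x}=0$. These are exactly the generator conditions $B_{x'|x}\geq 0$ for $x'\neq x$ together with zero column sums, which in particular force $B_{x|x}=-\sum_{x'\neq x}B_{x'|x}\leq 0$.

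Given these conditions I would construct the decomposition explicitly. Set $\alpha:=\max_x|B_{x|x}|$. If $\alpha=0$ then every diagonal entry vanishes, hence every off-diagonal entry vanishes too (non-negative off-diagonals with zero column sums), so $B=0=0\cdot(C-\iden)$ for any stochastic $C$. Otherwise $\alpha>0$ and I set $C:=\iden+B/\alpha$, so that $B=\alpha(C-\iden)$ holds by construction. It then remains to check that $C$ is stochastic: its off-diagonal entries $C_{x'|x}=B_{x'|x}/\alpha$ are non-negative, its diagonal entries $C_{x|x}=1-|B_{x|x}|/\alpha$ are non-negative because $|B_{x|x}|\leq\alpha$, and each column sums to $1+\tfrac{1}{\alpha}\sum_{x'}B_{x'|x}=1$. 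This closes the equivalence.

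The main obstacle is one of interpretation rather than computation: the converse is genuinely false if one assumes only that the single matrix $e^B$ is stochastic, since that is precisely the (hard) embeddability problem and admits generators $B$ with negative off-diagonal entries for $d\geq 3$. The argument above therefore hinges on exploiting the full one-parameter semigroup $\{e^{tB}\}_{t\geq 0}$, extracting the sign condition $B_{x'|x}\geq 0$ from positivity of $e^{tB}$ at small $t>0$ rather than from $t=1$ alone. I would make this reading explicit at the outset, so that the ``i.e.'' clause in the statement is understood as shorthand for stochasticity of the generated semigroup.
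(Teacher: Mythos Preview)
The paper does not supply its own proof of this lemma: it is stated explicitly as a result quoted without proof from Davies' textbook (Lemmas~12.3.4 and~12.3.5 of Ref.~\cite{davies2007linear}), and is used only as an input to the proof of Theorem~\ref{thm:embeddable}. So there is no in-paper argument to compare against.

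Your argument is the standard one and is correct. The forward direction via the Poisson-weighted expansion $e^{-t\alpha}\sum_k (t\alpha)^k C^k/k!$ is clean, and for the converse the choice $\alpha=\max_x|B_{x|x}|$, $C=\iden+B/\alpha$ is exactly how Davies constructs the decomposition. Your observation about interpretation is also well taken: the ``i.e.\ $A=e^B$ for some stochastic $A$'' clause in the statement is imprecise, and the equivalence genuinely requires the full semigroup $\{e^{tB}\}_{t\geq 0}$ to be stochastic, not merely the single time $t=1$. (A simple counterexample to the one-time reading: for $d=2$, $B=2\pi\begin{pmatrix}0&-1\\1&0\end{pmatrix}$ has $e^B=\iden$ stochastic but negative off-diagonal entries.) Making this reading explicit, as you do, is the right move; it is also the reading used throughout the paper, where $L$ generates $P(t)=e^{Lt}$ for all $t\geq 0$.
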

	
	\begin{lem}[Theorem~1.7 in Chap.~VII of Ref.~\cite{minc1988nonnegative}]
		\label{thm:part_karpelevich}
		A $d\times d$ stochastic matrix has no eigenvalues corresponding to points inside either of the two segments of the unit circle joining the point 1 with $e^{2\pi i/d}$ and $e^{-2\pi i/d}$. 
	\end{lem}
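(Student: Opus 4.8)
The plan is to derive the statement from the Perron--Frobenius theory of nonnegative matrices, reducing the question about eigenvalues on the unit circle to the well-understood structure of the \emph{peripheral spectrum}. Throughout I take $P$ to be column-stochastic, $P_{x'|x}\geq 0$ and $\sum_{x'}P_{x'|x}=1$, as elsewhere in the paper. Since $P$ and $P^{T}$ have the same spectrum, and $P^{T}$ is row-stochastic with $P^{T}\mathbf 1=\mathbf 1$, the matrix has $1$ as an eigenvalue and spectral radius $\rho(P)=\rho(P^{T})\leq\|P^{T}\|_{\infty}=1$, hence $\rho(P)=1$. The whole spectrum therefore lies in the closed unit disc, so the only points that could possibly fall on the arcs in question are genuine peripheral eigenvalues $\lambda$ with $|\lambda|=1$.

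First I would bring $P$ into Frobenius normal form, i.e.\ a block upper-triangular form whose diagonal blocks $B_{1},\dots,B_{m}$ are irreducible nonnegative matrices of sizes $d_{i}\leq d$. The spectrum of $P$ is the union of the spectra of the $B_{i}$, and since $\rho(B_{i})\leq\rho(P)=1$, any block with $\rho(B_{i})<1$ contributes only eigenvalues strictly inside the open disc. Consequently every peripheral eigenvalue $\lambda$ of $P$ must be a peripheral eigenvalue of some irreducible block $B_{i}$ with $\rho(B_{i})=1$.

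The heart of the argument is the classical Frobenius characterisation of the peripheral spectrum of an irreducible nonnegative matrix: if $B_{i}$ is irreducible with spectral radius $1$ and period (index of imprimitivity) $h_{i}$, then its eigenvalues of modulus $1$ are exactly the $h_{i}$-th roots of unity $\{e^{2\pi i k/h_{i}}:k=0,\dots,h_{i}-1\}$, each simple, and one always has $h_{i}\leq d_{i}\leq d$ (the $h_{i}$ distinct peripheral eigenvalues are eigenvalues of a matrix of size $d_{i}$). Writing a peripheral eigenvalue as $\lambda=e^{i\theta}$ with $\theta\in(-\pi,\pi]$, it follows that $\theta=2\pi k/h_{i}$ for some integer $k$ and some $h_{i}\leq d$. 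The combinatorial step is then immediate: for a fixed $h_{i}\leq d$, the smallest positive angle of an $h_{i}$-th root of unity other than $1$ itself is $2\pi/h_{i}\geq 2\pi/d$, so no such root sits at an angle in the open interval $(0,2\pi/d)$. Because $P$ is real its spectrum is closed under complex conjugation, which rules out the mirror interval $(-2\pi/d,0)$ as well. Hence no eigenvalue of $P$ can lie strictly inside either open arc of the unit circle joining $1$ to $e^{\pm 2\pi i/d}$, which is precisely the claim.

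The main obstacle is genuinely the invocation of the Frobenius structure theorem in the third step: establishing from scratch that the modulus-one eigenvalues of an irreducible nonnegative matrix are exactly the $h$-th roots of unity (with $h\leq d$) is the one nontrivial input, resting on the cyclic structure of the associated directed graph and the rotational invariance $\operatorname{spec}(B_{i})=e^{2\pi i/h_{i}}\operatorname{spec}(B_{i})$ of the full spectrum. Everything else---the reduction to irreducible diagonal blocks, the bound $\rho(P)=1$, and the final angular estimate $2\pi/h\geq 2\pi/d$---is routine once that ingredient is in place.
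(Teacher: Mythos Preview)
Your argument is clean for what it actually proves, but it rests on a misreading of the word \emph{segment}. You have interpreted ``segments of the unit circle'' as the open \emph{arcs} from $1$ to $e^{\pm 2\pi i/d}$ and accordingly restrict attention to peripheral eigenvalues with $|\lambda|=1$. In this lemma, however---and consistently with the Karpelevi\u{c} region pictured in Fig.~\ref{fig:stoch_eig}---a \emph{circular segment} is the two-dimensional region enclosed between the chord from $1$ to $e^{\pm 2\pi i/d}$ and the corresponding arc. This is exactly what the proof of Theorem~\ref{thm:embeddable} needs: from ``$\lambda_C$ not in the segment'' the paper infers $|\arg(\lambda_C-1)|\geq \pi/2+\pi/d$, i.e.\ that $\lambda_C$ lies on the far side of the \emph{chord} through $1$ and $e^{\pm 2\pi i/d}$. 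An eigenvalue strictly inside the open disc but on the near side of that chord (say $\lambda_C = 1 - \epsilon + i\delta$ with $\delta/\epsilon$ large) would violate this conclusion, and your Perron--Frobenius reduction to the peripheral spectrum says nothing whatsoever about such interior points.

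The paper does not supply its own proof here, simply citing Minc; but the relevant point is that excluding the full two-dimensional segment is substantially harder than excluding its bounding arc and cannot be reached by the route you describe. The standard arguments (going back to Dmitriev and Dynkin) take an eigenvector $v$ for the putative eigenvalue $\lambda$ in the segment and derive a contradiction by comparing the complex arguments of the coordinates $v_x$, using nonnegativity of the entries of $P$ together with the normalisation $\sum_{x'} P_{x'|x}=1$ in an essential way---the angular width $2\pi/d$ enters because a convex combination of $d$ unit vectors cannot have argument too far from all of them simultaneously. Your proposal, as written, establishes only the much weaker arc statement.
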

	
	\noindent We are now ready to present the proof of Theorem~\ref{thm:embeddable}.
	\begin{proof}
		By Lemma~\ref{thm:generator_form} every embeddable stochastic matrix $P$ is of the form $P=e^{\alpha(C-\iden)}$ with $\alpha\geq 0$ and $C$ a stochastic matrix. Hence, an eigenvalue $\lambda$ of $P$ is equal to $e^\mu$, where $\mu$ is an eigenvalue of $\alpha(C-\iden)$. Now, using Lemma~\ref{thm:part_karpelevich}, we have $|\!\arg\mu|\geq\pi/2+\pi/d$, so that \mbox{$\mu=-a+ib$} with $a\geq 0$ and \mbox{$|b|\leq a \tan^{-1}(\pi/d)$}.
		Introducing $r=e^{-a}$ and $\phi=b$ we obtain $\lambda=r e^{i\phi}$ and the constraint on $|b|$ is translated into
		\begin{equation*}
		e^{-|\phi| \tan \frac{\pi}{d}}\geq r\geq 0.
		\end{equation*} 
	\end{proof}

\section{Bounding transport rates}
\label{appendix:lbound}
	
From $L \v{\pi} = \v{0}$ and \mbox{$\pi_x \propto e^{-\beta \hbar \omega_x}$} one obtains for any fixed and distinct $x', x$
\begin{equation*}
L_{x'|x'}e^{-\beta \hbar \omega_{x'}}+L_{x'|x} e^{-\beta \hbar \omega_x} + \sum_{y \neq x',x} L_{x'|y} e^{-\beta \hbar \omega_y} = 0.
\end{equation*}
Recalling that $\omega_{x'x}:= \omega_{x'} - \omega_x$, the above can be rewritten as
\begin{equation*}
L_{x'|x'}+L_{x'|x} e^{-\beta \hbar \omega_{xx'}} +\sum_{y \neq x',x} L_{x'|y} e^{-\beta \hbar \omega_{yx'}}=0.
\end{equation*}
Since $L_{x'|x}\geq 0$ for $x'\neq x$ and $L_{x'|x'}\leq 0$, we thus arrive at
\begin{equation}
L_{x'|x}\leq|L_{x'|x'}| e^{-\beta \hbar \omega_{x'x}}.
\end{equation}
Using the above and the arithmetic-geometric mean inequality,
\begin{equation}
\sqrt{L_{x'|x}L_{y'|y}}\leq\frac{L_{x'|x}+L_{y'|y}}{2},
\end{equation}
and recalling the definition of $t^{x'|x}_{y'|y}$ and $\gamma_{x'y'}$ we arrive at Eq.~\eqref{eq:transport_bound}.

\end{document}